\definecolor{MATblue}{HTML}{0072BD}
\definecolor{MATpurple}{HTML}{7E2F8E}
\newcommand{\bof}[1]{\bm{#1}}
\newcommand{\E}{\mathbb{E}}
\newcommand{\Prob}{{\mathbb{P}}}
\newcommand{\inv}{^{-1}}
\newcommand{\abs}[1]{\left\lvert#1\right\rvert}
\newcommand{\norm}[1]{\left\lVert#1\right\rVert}
\newcommand{\Real}{{\mathbb{R}}}
\newcommand{\Complex}{{\mathbb{C}}}
\newcommand{\vect}{{\textnormal{vec}}}
\newcommand{\diag}{{\textnormal{diag}}}
\newcommand{\trace}{{\textnormal{tr}}}
\newcommand{\toprob}{\overset{P}{\to}}
\newcommand{\todist}{\overset{d}{\to}}
\newcommand{\argmin}[1]{\textnormal{arg}\min_{#1}}
\theoremstyle{definition}
\theoremstyle{plain}
\newtheorem{theorem}{Theorem}
\newtheorem{corollary}{Corollary}
\newtheorem{prop}{Proposition}
\newtheorem{lemma}{Lemma}
\newtheorem{lemma_apx}{Lemma}[section]
\theoremstyle{definition}
\declaretheorem[name=Remark]{remark}
\begin{document}
	
\title{\normalfont Ridge Regularized Estimation of VAR\\ Models for Inference}
\author{%
	Giovanni Ballarin\thanks{%
		Department of Economics, University of Mannheim, L7, 3-5, Mannheim, 68131, Germany. \texttt{Giovanni.Ballarin@gess.uni-mannheim.de}.
		I am grateful for the comments and suggestions from Lyudmila Grigoryeva, So Jin Lee, Thomasz Olma, Oliver Pf{\"a}uti and Mikkel Plagborg-M{\o}ller, and the seminar participants at the University of Mannheim, the HKMetrics Workshop and the Young Researchers Workshop on Big and Smart Data Analysis in Finance. 
		I am especially thankful to Claudia Noack for pointing out an important error in a previous version of this paper, as well as Jonas Krampe and Carsten Trenkler for their insightful discussions which helped develop this work significantly.
		Lastly, I wish to thank Peter C. B. Phillips, Atsushi Inoue and many other colleagues for the suggestion to consider adding a formal analysis of cross-validation in the paper.
		The author acknowledges support by the state of Baden-Württemberg through bwHPC.
	} \\
	University of Mannheim%
}
\date{\today}
	
\maketitle

\renewcommand{\abstractname}{\vspace{-\baselineskip}} 
\begin{abstract}
	\textbf{Abstract:}
	Ridge regression is a popular method for dense least squares regularization. In this work, ridge regression is studied in the context of VAR model estimation and inference. The implications of anisotropic penalization are discussed and a comparison is made with Bayesian ridge-type estimators. The asymptotic distribution and the properties of cross-validation techniques are analyzed.  Finally, the estimation of impulse response functions is evaluated with Monte Carlo simulations and ridge regression is compared with a number of similar and competing methods. 
\end{abstract}

\noindent\textit{Keywords:} ridge regularization, vector autoregression, inference, impulse responses\\ 
\textit{JEL codes:} C32, C51, C52

\pagebreak

\section{Introduction}

While the idea of using ridge regression for vector autoregressive model estimation dates back to \cite{hamilton1994time}, there seems to be no complete analysis of its properties and asymptotic theory in the literature. This paper fills this gap by analyzing the geometric and distributional properties of ridge in a VAR estimation framework, discussing its comparison to well-known Bayesian approaches and deriving the validity of cross-validation as a selection procedure for the ridge penalty. 

First, I show that the shrinkage induced by the ridge estimator, while intuitive in the setting of an isotropic penalty, produces complex effects when estimating a VAR model with a more flexible penalization scheme. This implies that the benefits of the bias-variance trade-off \citep{hastieRidgeRegularizationEssential2020} may be be hard to gauge a priori. I provide a tractable example where ridge can yield estimates that have {higher autoregressive dependence} than the least squares solution. To better understand how different ridge penalization strategies can be designed, I also make a comparison with Bayesian VAR estimators commonly used in macroeconometric practice. 

Second, I generalize the analysis of \cite{fuAsymptoticsLassotypeEstimators2000} and prove the consistency and asymptotic normality of the ridge estimator, a result that seems to be missing in the literature. For standard inference, the ridge penalty should either be negligible in the limit or its centering converge in probability to the true parameter vector. In both these cases, there is no asymptotic bias and no reduction in variance. Alternatively, in settings where a researcher is willing to assume that a subset of the VAR parameters features small coefficients, one can achieve an asymptotic reduction of variance by correctly tuning the ridge penalty matrix. I further derive the properties of cross-validation, which is a popular approach in practical applications to tune penalized estimators \citep{hastie2009elements,bergmeirNoteValidityCrossvalidation2018}. More specifically, I show that cross-validation is able to select penalties that are asymptotically valid for inference. In passing, I also prove that in an autoregressive setup the time dependence of regressors has an exponentially small effect on in-sample prediction error evaluation.

Lastly, I use Monte Carlo simulations to study the performance of the different ridge approaches discussed, focusing on impulse response inference. I consider two exercises: one is based on a three-variable VARMA(1,1) data generating process from \cite{kilianHowReliableAre2011}; the other is a VAR(5) model estimated in levels from a set of seven macroeconomic series, following \cite{giannonePriorSelectionVector2015}. The finding is that ridge can lead to improvements over unregularized methods in impulse response confidence interval length, while Bayesian estimators show the best overall performance due to the underlying flexibility of their priors.

\paragraph{Related Literature.}

This paper does not discuss the high-dimensional setting, where the number of regressors grows together with the sample size. Some important work has been done in this direction already. \cite{Dobriban2018} derive an explicit expression for the predictive risk of ridge regression assuming a high-dimensional random effects model. Other works in this vein are \cite{Liu2020Ridge,Patil2021} and \cite{Hastie2022}, which are mostly focused on penalty selection by cross-validation as well as structural features of ridge. Generally speaking, the complexity of analyzing ridge regression in high dimensions is a challenge to precisely understanding its practical implications. As I show below, in the context of finite-dimensional VARs, asymptotic inference demands that the ridge penalty becomes asymptotically negligible at appropriate rates. Thus, a challenge is understanding in what way high-dimensional time series problems would benefit from the use of ridge. This question is beyond the scope of this paper.

In the time series forecasting literature, ridge regression is commonly used for prediction. I provide a partial list of contributions in this direction. \cite{inoueHowUsefulBagging2008} use ridge regularization for forecasting U.S. consumer price inflation and argue that it compares favorably with bagging techniques; \cite{demolForecastingUsingLarge2008a} use a Bayesian VAR with posterior mean equivalent to a ridge regression in forecasting; \cite{ghoshHighDimensionalPosteriorConsistency2019} again study the Bayesian ridge, this time however in the high-dimensional context; \cite{coulombeHowMachineLearning2020}, \cite{Fuleky2020}, \cite{babiiMachineLearningTime2021}, and \cite{medeirosForecastingInflationDataRich2021} compare LASSO, ridge and other machine learning techniques for forecasting with large economic datasets. \cite{Fuleky2020} gives a textbook treatment of penalized time series estimation, including ridge, but does not discuss inference. The ridge penalty is considered within a more general mixed $\ell_1$-$\ell_2$ penalization setting in \cite{smeekesMacroeconomicForecastingUsing2018}, who study the performance and robustness of penalized estimates for constructing forecasts. 

Regarding inference, \cite{liLocalProjectionsVs2022} provided a general exploration of shrinkage procedures in the context of structural impulse response estimation. Very recently, \cite{cavaliereBootstrapInferencePresence2022} suggested a methodology for inference on ridge-type estimators that relies on bootstrapping. Finally, shrinkage of autoregressive models to constrained sub-models was discussed by \cite{hansenSteinCombinationShrinkage2016a} in a more general setting.

Finally, various estimation problems can either be cast as or augmented with ridge-type regressions. \cite{GouletCoulombe2023} shows that the estimation of VARs with time-varying parameters can be written as ridge regression. \cite{plagborg2016essays} and \cite{barnichonImpulseResponseEstimation2019} both use ridge to derive smoothed local projection impulse response functions.

\paragraph{Outline.}

Section \ref{section:ridge_var} provides a discussion of the ridge penalty and the ridge VAR estimator. In Section \ref{section:ridge_shrinkage} I deal with the properties of ridge-induced shrinkage in the autoregressive coefficients. I discuss the connections between frequentist and Bayesian ridge for VAR models within Section \ref{section:bayesian_ridge}. Section \ref{section:standard_inference} develops the asymptotic theory and inference result in the case where there is no asymptotic shrinkage. This includes studying the property of cross-validation under dependence. Section \ref{section:inference_shrinkage} provides inference and CV results in a setting where some shrinkage of a subset of parameters is possible. Section \ref{section:monte_carlo_simulations} presents Monte Carlo simulations focused on impulse response estimation. Section \ref{sect:end} concludes. Finally, the Appendix contains more detailed derivations, as well as all proofs and supplementary information.

\paragraph{Notation.}

Define $\Real_{+}$ to be the set of strictly positive real numbers. Vectors $v \in \Real^N$ and matrices $A \in \Real^{N\times M}$ are always denoted with lower and upper-case letters, respectively. Throughtout, I will use $I_M$ to represent the identity matrix of dimension $M$. For any vector $v \in \Real^N$, $\norm{v}$ is the Euclidean norm. For any matrix $A \in \Real^{N \times M}$, $\norm{A}$ is the spectral norm unless stated otherwise; $\norm{A}_{\max} = \max_{i,j} \abs{a_{ij}}$ is the maximal entry norm; $\norm{A}_F = (\trace\{A' A\})^{-1/2}$ is the Frobenius norm; $\vect(\cdot)$ is the vectorization operator and $\otimes$ is the Kronecker product \citep{lutkepohl2005new}. If a vector represents a vectorized matrix, then it will be written in bold, that is, for $A \in \Real^{N\times M}$ I write $\vect(A) = \bof{a} \in \Real^{NM}$. 
Let $\Lambda = \diag\{\lambda_1, \ldots, \lambda_{K^2 p}\}$, $\lambda_i  > 0 $ for all $i = 1, \ldots, K^2 p$. To give the partial ordering of diagonal positive semi-definite penalization matrices, let $\Lambda_1 = \diag\{\lambda_{1,j}\}_{j=1}^{K^2 p}$ and $\Lambda_2 = \diag\{\lambda_{2,j}\}_{j=1}^{K^2 p}$. I write $\Lambda_1 \prec \Lambda_2$ if $\lambda_{1,i} < \lambda_{2,i}$ for all $i = 1, \ldots, K^2p$; $\Lambda_1 \preceq \Lambda_2$ if $\lambda_{1,i} \leq \lambda_{2,i}$  for all $i$ and $\exists\ j \in {1, \ldots, K^2p}$ such that $\lambda_{1,j} < \lambda_{2,j}$.
Symbols $\toprob$ and $\todist$ are used to indicate convergence in probability and convergence in distribution, respectively.

\section{Ridge Regularized VAR Estimation}\label{section:ridge_var}

Let $y_t = (y_{1t}, \ldots, y_{Kt})'$ be a $K$-dimensional vector autoregressive process with lag length $p \geq 1$ and parametrization 
\begin{equation}\label{eq:linear_var_model}
	y_t = \nu_t + A_1 y_{t-1} + A_2 y_{t-2} + \ldots + A_p y_{t-p} + u_t ,
\end{equation}
where $u_t = (u_{1t}, \ldots, u_{Kt})'$ is additive noise such that $u_t$ are identically, independently distributed with $\E[u_{it}] = 0$ and $\text{Var}[u_t] = \Sigma_u$, and $\nu_t$ is a deterministic trend. For simplicity, in the remainder I shall assume that $\nu_t = 0$ so that $y_t$ has no trend component -- equivalently, $y_t$ is a de-trended series. 

For a given sample size $T$ define $Y = (y_1, \ldots, y_T) \in \Real^{K\times T}$, $z_t = (y_t', y_{t-1}', \ldots, y_{t-p+1}')' \in \Real^{Kp}$, $Z = (z_0, \dots, z_{T-1}) \in \Real^{Kp\times T}$, $B = (A_1, \ldots, A_p) \in \Real^{K\times Kp}$, $U = (u_1, \ldots, u_T) \in \Real^{K\times T}$, and vectorized counterparts $\bof{y} = \vect(Y)$, $\bm{\beta} = \vect(B)$ and $\bof{u} = \vect(U)$. Accordingly, $Y = B Z + U$ and $\bof{y} = (Z' \otimes I_K) \bm{\beta} + \bof{u}$, where $\Sigma_{\bof{u}} = I_K \otimes \Sigma_u$. Importantly, throughout this work, I will assume that the cross-sectional dimension $K$ remains fixed.

Ridge regularization is a modification of the least squares objective by the addition of a term dependent on the Euclidean norm of the coefficient vector. The \textit{isotropic} {Ridge-regularized Least Squares} (RLS) estimator is therefore defined as
\begin{equation*}
	\hat{\bm{\beta}}^R(\lambda) 
	:= 
	 \argmin{\bm{\beta}}\, \frac{1}{T} \norm{\bm{y} - (Z' \otimes I_K)\bm{\beta}}^2 + \lambda \norm{\bm{\beta}}^2 ,
\end{equation*}
where $\lambda > 0$ is the scalar regularization parameter or regularizer. When $\lambda \norm{\bm{\beta}}^2$ is replaced with quadratic form $\bm{\beta}' \Lambda \bm{\beta}$ for a positive definite matrix $\Lambda$ the above is often termed Tikhonov regularization. To avoid confusion, I shall also refer to it as ``ridge'', since in what follows $\Lambda$ will always be assumed diagonal. Since $\Lambda$ does not, in general, penalize coefficients equally, it will yield an \textit{anisotropic} ridge estimator. 

By solving the normal equations (see Appendix \ref{Supp_Appendix:LS_and_RLS_Estimators}), the RLS estimator with positive semi-definite regularization matrix $\Lambda \in \Real^{K^2 p \times K^2 p}$ is shown to be
\begin{equation*}\label{eq:RLS}
	\hat{\bm{\beta}}^R(\Lambda) = \left( \frac{Z Z'}{T} \otimes I_K + \Lambda \right)\inv \frac{(Z \otimes I_K)\bof{y}}{T}.
\end{equation*}
When a centering vector $\bm{\beta}_0 \not= 0$ is included in penalty $(\bm{\beta} - \bm{\beta}_0)' \Lambda (\bm{\beta} - \bm{\beta}_0)$, the RLS estimator becomes
\begin{equation}\label{eq:RLS_b0}
	\hat{\bm{\beta}}^R(\Lambda, \bm{\beta}_0) = \left( \frac{Z Z'}{T} \otimes I_K + \Lambda \right)\inv \left( \frac{(Z \otimes I_K)\bof{y}}{T} + \Lambda \bm{\beta}_0 \right).
\end{equation}

In the context of multivariate estimation, one has to make a further distinction between two related types of ridge estimators.
I let $\hat{B}^R(\Lambda, \bm{\beta}_0)$ be the {de-vectorized} coefficient estimator obtained from reshaping $\hat{\bm{\beta}}^R(\Lambda, \bm{\beta}_0)$ to a $K \times K p$ matrix. But one can also consider the \textit{matrix RLS estimator} $\hat{B}^R_{\textnormal{mat}}(\Lambda_{Kp}, B_0)$ given by
\begin{equation*}
	\hat{B}^R_{\textnormal{mat}}(\Lambda_{Kp}, B_0) = T\inv (Y + B_0\, \Lambda_{Kp}) Z' \left( T\inv Z Z' + \Lambda_{Kp} \right)\inv ,
\end{equation*}
where $\Lambda_{Kp} = \diag\{\lambda_1, \ldots, \lambda_{K p}\}$ and $B_0$ is a centering matrix. The distinction is important because the vectorized and matrix RLS estimators in general need not coincide. As discussed in Appendix~\ref{Supp_Appendix:Structure_Lambda}, $\hat{B}^R(\Lambda, \bm{\beta}_0)$ allows for more general penalty structures compared to $\hat{B}^R_{\textnormal{mat}}(\Lambda_{Kp}, B_0)$. I, therefore, focus on the former rather than the latter.

\section{Shrinkage}\label{section:ridge_shrinkage}

In this section, I discuss both the isotropic ridge penalty, i.e. the ``standard'' ridge approach, as well as an anisotropic penalty that is better adapted to the VAR setting. An important result is that, even in simple setups with only two variables, the shrinkage induced by ridge can either increase or reduce bias, as well as the stability of autoregressive estimates.

Throughout this section, I consider \textit{fixed} design matrices and the focus will be on the geometric properties of ridge.

\subsection{Isotropic Penalty}

The most common way to perform a ridge regression is through {isotropic} regularization, that is, $\Lambda = \lambda I$ for some scalar $\lambda \geq 0$. Isotropic ridge has been extensively studied, see for example the comprehensive review of \cite{hastieRidgeRegularizationEssential2020}. In regards to shrinkage, an isotropic ridge penalty can be readily studied.

\begin{prop}\label{prop:iso_shrink}
	Let $Z \in \Real^{M\times T}$, $Y \in \Real^{T}$ for $T > M$ be regression matrices. For $\lambda_\bullet > \lambda > 0$ and isotropic RLS estimator $\hat{\beta}^R(\lambda) := (T\inv Z Z' + \lambda I_M)\inv (T\inv Z Y)$ it holds
	\begin{equation*}
		\norm{\hat{\beta}^R(\lambda_\bullet)} < \norm{\hat{\beta}^R(\lambda)}.
	\end{equation*}
\end{prop}

\begin{proof}
	Using the full singular-value decomposition (SVD), decompose $T^{-1/2} Z=U D V' \in \Real^{M\times T}$ where $U$ is $M\times M$ orthogonal, $D$ is $M \times T$ diagonal and $V$ is $T \times T$ orthogonal. Write
	\begin{align*}
		\hat{\beta}^R(\lambda_\bullet) 
		& = (T\inv Z Z' + \lambda_\bullet I_M)\inv (T\inv Z Y) \\
		& = (U D V' V D U' + \lambda_\bullet I_M)\inv U D V' (T^{-1/2} Y) \\
		& = U (D^2 + \lambda_\bullet I_M)\inv D V' (T^{-1/2} Y) \\
		& = U (D^2 + \lambda_\bullet I_M)\inv (D^2 + \lambda I_M) (D^2 + \lambda I_M)\inv D V' (T^{-1/2} Y)\\
		& = \left[ U (D^2 + \lambda_\bullet I_M)\inv (D^2 + \lambda I_M) U' \right] \hat{\beta}^R(\lambda).
	\end{align*}
	Since $D^2 = \diag\{\sigma_j^2\}_{j=1}^M$, the term within brackets is $U\, \diag\{ (\sigma_j^2 + \lambda) / (\sigma_j^2 + \lambda_\bullet) \}_{j=1}^M\, U'$. Moreover, because the spectral norm is unitary invariant and $\lambda_1 > \lambda_2$, it follows that
	\begin{equation*}
		\norm{U (D^2 + \lambda_\bullet I_M)\inv (D^2 + \lambda I_M) U'} 
		= 
		\norm{ \diag\{ (\sigma_j^2 + \lambda) / (\sigma_j^2 + \lambda_\bullet) \}_{j=1}^M } < 1 .
	\end{equation*}
	Finally, by the sub-multiplicative property it holds
	\begin{equation*}
		\norm{\hat{\beta}^R(\lambda_\bullet)} 
		\leq 
		\norm{U (D^2 + \lambda_1 I_M)\inv (D^2 + \lambda I_M) U'} \cdot \norm{\hat{\beta}^R(\lambda)} < \norm{\hat{\beta}^R(\lambda)} 
	\end{equation*}
	as claimed.
\end{proof}

Proposition \ref{prop:iso_shrink} and its proof expose the main ingredients of ridge regression. From the SVD of $Z$ used above, it is clear that ridge regularization acts uniformly along the orthogonal directions that are the columns of $V$. The improvement in conditioning of the inverse comes from all diagonal factors $[(D^2 + \lambda_\bullet I_M)\inv D]_j = \sigma_j / (\sigma_j^2 + \lambda_\bullet)$ being well-defined even when $\sigma_j = 0$ (as is the case in collinear systems). 

However, directly applying isotropic ridge to vector autoregressive models is not necessarily the most effective estimation approach. Stable VAR models show decay in the absolute size of coefficients over lags. So it is reasonable to chose a more general ridge penalty that can accommodate lag decay.

\subsection{Lag-Adapted Penalty}

I now consider a different form for $\Lambda$ that is of interest when applying ridge specifically to a VAR model.
Define family $\mathcal{F}^{(p)}$ of \textit{lag-adapted} ridge penalty matrices as
\begin{equation*}
	\mathcal{F}^{(p)} = \{\diag\{\lambda_1, \ldots, \lambda_p\} \otimes I_{K^2} \:\vert\: \lambda_i \in \Real_{+},\, i = 1, \ldots, p\},
\end{equation*}
where each $\lambda_i$ intuitively implies a different penalty for the elements of each coefficient matrix $A_i$, $i = 1, \ldots, p$.\footnote{Note that with a lag-adapted penalty it is also possible to directly use the matrix ridge estimator since the penalty for $\hat{\bm{\beta}}^R$ is given by $\diag\{\lambda_1, \ldots, \lambda_p\} \otimes I_{K^2} = (\diag\{\lambda_1, \ldots, \lambda_p\} \otimes I_K) \otimes I_K$, see Appendix \ref{Supp_Appendix:Structure_Lambda}.} Family $\mathcal{F}^{(p)}$ allows imposing a ridge penalty that is coherent with the lag dimension of an autoregressive model. It is parametrized by $p$ distinct penalty factors, meaning that the penalization is \textit{anisotropic}.

\begin{prop}\label{prop:lag_adapt_shrink}
	Let $Z \in \Real^{K p \times T}$, $\bof{y} \in \Real^{K T}$ for $T > K p$ be multivariate VAR regression matrices.
	Given subset $\mathcal{S} \subseteq \{1, \ldots, p\}$ of cardinality $s = \abs{\mathcal{S}}$, for $\Lambda^{(p)} \in \mathcal{F}^{(p)}$ define $\hat{\bm{\beta}}^R(\Lambda^{(p)})_{[\mathcal{S}]}$ as the vector of $s K^2$ coefficient estimates located at indexes $1 + K^2 (j-1), \ldots, K^2 j$ for $j \in \mathcal{S}$. 
	Let $\mathcal{S}^\mathnormal{c} = \{1, \ldots, p\} \setminus \mathcal{S}$ be the complement of $\mathcal{S}$. 
	\begin{itemize}
		\item[(a)] If $\lambda_1 \geq \lambda_2$, then $\norm{\hat{\bm{\beta}}^R(\lambda_1 I_{K^2p})_{[\mathcal{U}]}} \leq \norm{\hat{\bm{\beta}}^R(\lambda_2 I_{K^2p})_{[\mathcal{U}]}}$ for any $\mathcal{U} \subset \{1, \ldots, K^2 p\}$. The inequality is strict when $\lambda_1 > \lambda_2$. 
		\item[(b)] Let $\hat{\bm{\beta}}^{LS}_{[\mathcal{S}]}$ be the least squares estimator of the autoregressive model with only the lags indexed by $\mathcal{S}$ included and zeros as coefficients for the lags indexed by $\mathcal{S}^\mathnormal{c}$. Similarly, let $\Lambda^{(p)}_{[\mathcal{S}]}$ be the subset of diagonal elements in $\Lambda^{(p)}$ penalizing the lags in $\mathcal{S}$. Then
		\begin{equation*}
			\lim_{\substack{\Lambda^{(p)}_{[\mathcal{S}]} \to 0 \\ \Lambda^{(p)}_{[\mathcal{S}^\mathnormal{c}]} \to \infty}} \hat{\bm{\beta}}^R(\Lambda^{(p)}) 
            = 
            \hat{\bm{\beta}}^{LS}_{[\mathcal{S}]}, 
		\end{equation*}
        where $\Lambda^{(p)}_{[\mathcal{S}]} \to 0$ and $\Lambda^{(p)}_{[\mathcal{S}^\mathnormal{c}]} \to \infty$ are to be intended as the element-wise convergence.
	\end{itemize}
\end{prop}

Proposition \ref{prop:lag_adapt_shrink} shows that the limiting geometry of a lag-adapted ridge estimator is thus identical to that of a least squares regression run on the subset specified by $\mathcal{S}$. By controlling the size of coefficients $\{\lambda_1, \ldots, \lambda_p\}$ it is therefore possible to obtain pseudo-model-selection. However, in the next section I show that anisotropic penalization produces complex effects on the models' coefficient estimates. 

\subsection{Effects of Anisotropic Penalization}

In this section, I explore the effect of a lag-adapted ridge penalty on the estimate VAR coefficients, and, more generally, the properties of ridge estimators with anisotropic penalization.

Since ridge operates along the principal components, there is no immediate relationship between a specific subset of estimated coefficients and a given diagonal block of $\Lambda^{(p)}$. For autoregressive modeling, three effects are of interest: the shrinkage of coefficient matrices $A_i$ relative to the choice of $\lambda_i$; the entity of the bias introduced by shrinkage, and the impact of shrinkage on the persistence of the estimated model. 

To evaluate these effects, I consider a simple VAR(2) model
\begin{equation*}
	y_t = A_1 y_{t-1} + A_2 y_{t-2} + u_t, \quad u_t \sim \text{i.i.d.}\ \mathcal{N}(0, \Sigma_u),
\end{equation*}
where
\begin{equation*}
	A_1 = \begin{bmatrix}
		0.8 & 0.1 \\
		-0.1 & 0.7
	\end{bmatrix},
	\quad
	A_2 = \begin{bmatrix}
		0.1 & -0.2 \\
		-0.1 & 0.1
	\end{bmatrix},
	\quad 
	\Sigma_u = \begin{bmatrix}
		0.3 &   0 \\  
		0  &   5
	\end{bmatrix}
	.
\end{equation*}
~ \\
A sample of length $T = 200$ is drawn, demeaned and used to estimate coefficients ${A}_1$ and $A_2$. The VAR(2) model is fitted using the lag-adapted ridge estimator $\hat{B}^R(\Lambda^{(2)})$, where $\Lambda^{(2)} = \diag\{\lambda_1, \lambda_2\} \otimes I_2$, which can be partitioned into estimates $\hat{A}_1^R(\Lambda^{(2)})$ and $\hat{A}_2^R(\Lambda^{(2)})$.

\paragraph{Shrinkage.}
To study shrinkage, I consider the restricted case of $\lambda_1 \in [10^{-2}, 10^{6}]$ and $\lambda_2 = 0$. The ridge estimator is computed for varying $\lambda_1$ over a logarithmically spaced grid. Figure \ref{fig:ridgeshrinknorm__A} shows that $\norm{\hat{B}^R(\Lambda^{(2)})}_F \approx \norm{\hat{B}^{LS}}_F$ for $\lambda_1 \approx 0$, but as the penalty increases $\norm{\hat{A}_1^R(\Lambda^{(2)})}_F$ decreases while $\norm{\hat{A}_2^R(\Lambda^{(2)})}_F$ grows. The resulting behavior of $\norm{\hat{B}^R(\Lambda^{(2)})}_F$ is non-monotonic in $\lambda_1$, although indeed $\norm{\hat{B}^R(\Lambda^{(2)})}_F < \norm{\hat{B}^{LS}}_F$ in the limit $\lambda_1 \to \infty$. 
This effect is due to the model selection properties of lag-adapted ridge, and the resulting omitted variable bias. Therefore, in practice it is not generally true that anisotropic ridge induces monotonic shrinkage of estimates.

\begin{figure}[t!]
	\centering
	\makebox[\textwidth][c]{
		\begin{subfigure}{0.49\textwidth}
			\includegraphics[width=\textwidth]{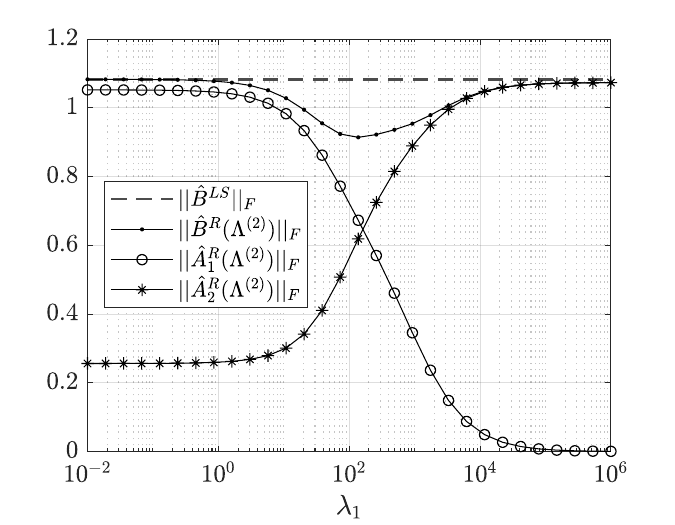}
			\caption{}
			\label{fig:ridgeshrinknorm__A}
		\end{subfigure}
		\begin{subfigure}{0.49\textwidth}
			\includegraphics[width=\textwidth]{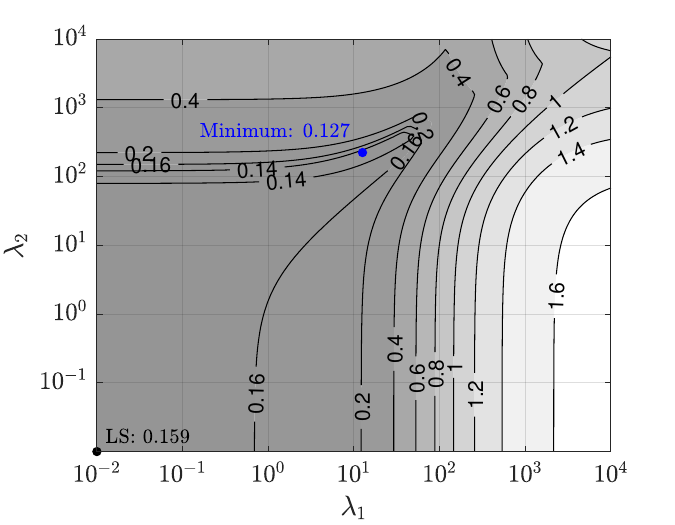}
			\caption{}
			\label{fig:ridgeshrinknorm__B}
		\end{subfigure}
	}\\
	\vspace{.5em}
	\begin{subfigure}{0.49\textwidth}
		\includegraphics[width=\textwidth]{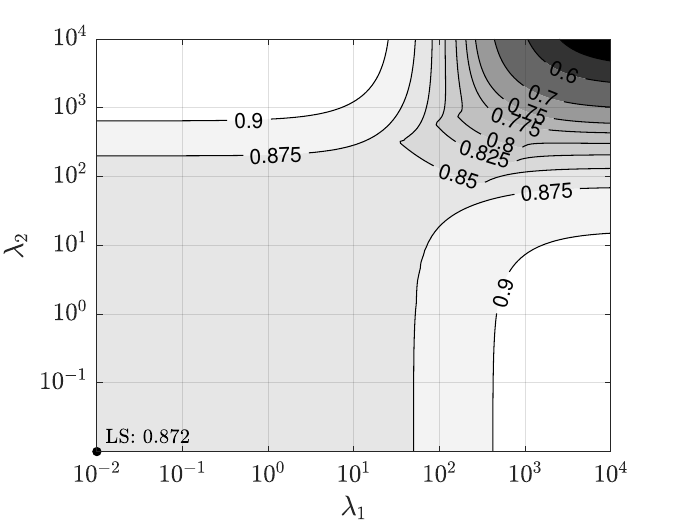}
		\caption{}
		\label{fig:ridgeshrinknorm__C}
	\end{subfigure}
	\vspace{0.3em}
	\caption{Shrinkage of coefficients estimate in Frobenius norm (a); bias induced by shrinkage (b); change in stability of estimated VAR model at different levels of penalization, measured by the absolute value of the largest companion form eigenvalue (c).}
	\label{fig:ridgeshrinknorm}
\end{figure}

\paragraph{Bias.}
Since ridge bias is hard to study theoretically, I use a simulation with the same setup of Figure \ref{fig:ridgeshrinknorm__A}, this time with $\lambda_1, \lambda_2 \in [10^{-2}, 10^{4}]$. The grid is logarithmic with $150$ points.
Figure \ref{fig:ridgeshrinknorm__B} presents a level plot of the sup-norm ridge bias $\norm{ \hat{B}^R(\Lambda^{(2)}) - B }_\infty$ given multiple combinations of $\lambda_1$ and $\lambda_2$. While there can be gains compared to the least squares estimator $\hat{B}^{LS}$, they are modest. Moreover, level curves of the bias surface show that gains concentrate in a very thin region of the parameter space. Consequently, in practice any (data-driven) ridge penalty selection criterion is unlikely to yield bias improvement over least squares. Yet, in large VAR models with many lags the reduction in variance of the ridge estimator often yields improvements over unregularized procedures \citep{liLocalProjectionsVs2022}.
However, the bias-variance trade-off in ridge is not a free-lunch when performing inference. \cite{prattLengthConfidenceIntervals1961} showed that it is not possible to produce a test (equivalently, a CI procedure) which is valid uniformly over the parameter space and yields meaningfully smaller confidence intervals than any other valid method. 

\paragraph{Stability.}
To study the stability of ridge VAR estimates, I reuse the results of the bias simulation above. Let $\mathbb{A}$ be the companion matrix of $[A_1, A_2]$, and $\hat{\mathbb{A}}^R$ the companion matrix of estimates $[\hat{A}_1^R(\Lambda^{(2)}), \hat{A}_2^R(\Lambda^{(2)})]$. For all combinations $(\lambda_1, \lambda_2)$, I compute the largest eigenvalue $\omega_1(\hat{\mathbb{A}}^R)$ of $\hat{\mathbb{A}}^R$. Note that if $\lvert \omega_1(\hat{\mathbb{A}}) \rvert < 1$, then the estimated VAR(2) is stable \citep{lutkepohl2005new}. Figure \ref{fig:ridgeshrinknorm__C} presents the level sets for the surface of maximal eigenvalue moduli, and for comparison $\lvert \omega_1(\hat{B}^{LS}) \rvert$ is shown at the origin.\footnote{If $\Lambda^{(2)} \to 0$, then by continuity of eigenvalues it follows that $\lvert \omega_1(\hat{\mathbb{A}}^R) \rvert \to \lvert \omega_1(\hat{\mathbb{A}}^{LS}) \rvert$, see Appendix \ref{Autocov_Matrix_Conditioning}.} While along the main diagonal there is a clear decrease in $\lvert \omega_1(\hat{\mathbb{A}}^R) \rvert$ as isotropic penalization increases, when $\lambda_1$ is large and $\lambda_2 \ll 1$ (or vice versa) the maximal eigenvalue increases instead. Therefore, an estimate of a VAR model obtained with anisotropic ridge may be \textit{closer} to unit root than the least squares estimate.

\section{Bayesian and Frequentist Ridge}\label{section:bayesian_ridge}

So far, I have discussed standard ridge penalization schemes. In this section, I study the posterior mean of Bayesian VAR (BVAR) priors commonly applied in the macroeconometrics literature. I show that such posteriors are in fact specific GLS formulations of the ridge estimator. This comparison highlights that ridge can be seen as a way to embed prior knowledge into the least squares estimation procedure by means of centering and rescaling coefficient estimates.

\subsection{Litterman-Minnesota Priors}

In Bayesian time series modeling, the so-called Minnesota or Litterman prior has found great success \citep{litterman1986}. For stationary processes which one believes to have reasonably small dependence, a zero-mean normal prior can be put on the VAR parameters, with non-zero prior variance. Assuming that the covariance matrix of errors $\Sigma_u$ is known, the Litterman-Minnesota has posterior mean
\begin{equation}\label{eq:litterman_posterior}
	\overline{\bm{\beta}} \:\vert\: \Sigma_u = \left[ \underline{V}_{\bm{\beta}}\inv + (Z Z' \otimes \Sigma_u\inv) \right]\inv (Z \otimes \Sigma_u\inv ) \bof{y},
\end{equation}
where $\underline{V}_{\bm{\beta}} \succ 0$ is the prior covariance matrix of $\bm{\beta}$ \citep{lutkepohl2005new}. It is common to let $\underline{V}_{\bm{\beta}}$ be diagonal, and often the entries follow a simple pattern which depends on lag, individual components variances, and prior hyperparameters. For example, \cite{banburaLargeBayesianVector2010} suggest the following structure for the diagonal
\begin{equation}\label{eq:litterman_posterior__var_prior}
	v_{i,jk} = \begin{cases}
		\dfrac{\lambda^2}{i^2} & \quad \text{if } j = k, \\
		\theta \dfrac{\lambda^2}{i^2} \dfrac{\sigma_j^2}{\sigma_k^2} & \quad \text{if } j \not= k,
	\end{cases} 
\end{equation}
where $v_{i,jk}$ is the prior variance for coefficients $(A_i)_{jk}$ for $i = 1, \ldots, p$ and $j,k = 1, \ldots, K$. Here, $\sigma_j$ is the $j$-th diagonal element of $\Sigma_u$, $\theta \in (0,1)$ specifies beliefs on the explanatory importance of own lags relative to other variables' lags, while $\lambda \in [0, \infty]$ controls the overall tightness of the prior. The extreme $\lambda = 0$ yields a degenerate prior centered at $\overline{\bm{\beta}} = 0$, while $\lambda = \infty$ reduces the posterior mean to OLS estimate $\hat{\bm{\beta}}^{LS}$. Factor $1/i^2$, which explicitly shrinks variance at higher lags, was originally introduced by \cite{demolForecastingUsingLarge2008a}, who formally developed the idea that coefficients at deeper lags should be coupled with more penalizing priors. Note that, in \eqref{eq:litterman_posterior__var_prior}, assuming $\sigma_j^2 = \sigma_k^2$ for all $j, k = 1, \ldots, K$ and setting $\theta = 1$, produces a $\underline{V}_{\bm{\beta}}$ that has a lag-adapted structure with quadratic lag decay. 

Equation \eqref{eq:litterman_posterior} more generally demonstrates that the Minnesota posterior mean is equivalent to a ridge procedure. It is important to notice that, while with least squares the OLS and GLS estimators of VAR coefficients coincide, this is not the case with ridge regression. Regularizing a GLS regression will yield
\begin{equation}\label{eq:GLS_RLS}
	\hat{\bm{\beta}}^{RGLS}(\Lambda) := \left[ \Lambda + (Z Z' \otimes \Sigma_u\inv) \right]\inv (Z \otimes \Sigma_u\inv ) \bof{y}.
\end{equation}
instead of $\hat{\bm{\beta}}^R$, which is equivalent to \eqref{eq:litterman_posterior} under an appropriate choice of $\Lambda$. While I develop the asymptotic results for $\hat{\bm{\beta}}^{R}$ assuming a centering parameter $\underline{\bm{\beta}}_0 \not= 0$ in general, I do not directly study the properties $\hat{\bm{\beta}}^{RGLS}$. The generalization to GLS ridge employing the least squares error covariance estimator $\hat{\Sigma}_T^{LS}$ should follow from straightforward arguments. In Section \ref{section:monte_carlo_simulations}, I focus on providing evidence on the application $\hat{\bm{\beta}}^{RGLS}$ in terms of its pointwise impulse response estimation mean-squared error.

\subsection{Hierarchical Priors}

Recent research on Bayesian vector autoregressions exploit more sophisticated priors compared to the Litterman-Minnesota design. \cite{giannonePriorSelectionVector2015} develop an advanced BVAR model by setting up hierarchical priors which entail not only model parameters, but also hyperparameters. They impose
\begin{align*}
	\Sigma_u \; & \sim \; \textnormal{IW}(\underline{\Psi}, \underline{d}) , \\
	\bm{\beta} \:\vert\: \Sigma_u \; & \sim \; \mathcal{N}\left( \underline{\bm{\beta}}, \lambda (\Sigma_u \otimes \underline{\Omega}) \right) ,
\end{align*}
for hyperparameters $\underline{\bm{\beta}}$, $\underline{\Omega}$, $\underline{\Psi}$ and $\underline{d}$, where $\textnormal{IW}$ is the Inverse-Wishart distribution. Here, too, scalar $\lambda \in [0, \infty]$ controls prior tightness. Let $\underline{B}$ be the matrix form of the VAR coefficient prior mean, so that $\vect(\underline{B}) = \underline{\bm{\beta}}$. The resulting (conditional) posterior mean $\overline{B}$ is given by
\begin{equation}\label{eq:glp_posterior}
	\overline{B} \:\vert\: \Sigma_u = \left[ (\lambda \underline{\Omega})\inv + Z Z' \right]\inv \left[ Z Y + (\lambda \underline{\Omega})\inv \underline{B} \right] .
\end{equation}
Observe that equation \eqref{eq:glp_posterior} is effectively equivalent to a centered ridge estimator, c.f. \eqref{eq:RLS_b0}. 

The introduction of a hierarchical prior leaves space to add informative hyperpriors on the model hyperparameters, allowing for a more flexible fit. Indeed, removing the zero centering constraint from the prior on $\bm{\beta}$ can improve estimation. It is often the case that economic time series show a high degree of correlation and temporal dependence, therefore imposing $\bm{\beta} = 0$ as in the Minnesota prior is inadequate. 
In fact, \cite{giannonePriorSelectionVector2015} show that their approach yields substantial improvements in forecasting exercises, even when hyperparameter priors are relatively flat and uninformative.

\section{Standard Inference}\label{section:standard_inference}

In this section, I state the main asymptotic results for the RLS estimator $\hat{\bm{\beta}}^R(\Lambda, \bm{\beta}_0)$ with general regularization matrix $\Lambda$. I shall allow $\Lambda$ and non-zero centering coefficient $\bm{\beta}_0$ to be, under appropriate assumptions, random variables dependent on sample size $T$. In particular, $\bm{\beta}_0$ may be a consistent estimator of $\bm{\beta}$. 

I will impose the following assumptions.

\vspace{-1em}
\paragraph{Assumptions}

\begin{itemize}\vspace{-0.5em}
	\item[A.] $\{u_t\}_{t=1}^T$ is a sequence of i.i.d. random variables with $\E[u_{it}] = 0$, covariance $\E[u_t u_t'] = \Sigma_u$ non-singular positive definite and $\E\abs{u_{it} u_{jt} u_{mt} u_{nt}} < \infty$, $i,j,m,n = 1, \ldots, K$. 
	\item[B.] There exists $\rho > 1$ such that $\det(I_K - \sum_{i=1}^{p} A_i z^i) \not= 0$ for all complex $z$, $\abs{z} \leq \rho$.
	\item[C.] There exist $0 < \underline{m} \leq \overline{m} < \infty$ such that $\underline{m} \leq \omega_{K}(\Gamma) \leq \omega_{1}(\Gamma) \leq \overline{m}$, where $\Gamma = \E[z_t z_t']$ is the autocovariance matrix of $z_t$ and $\omega_{1}(\Gamma)$, $\omega_{K}(\Gamma)$ are its largest and smallest eigenvalues, respectively.
\end{itemize}

Assumption A is standard and allows to prove the main asymptotic results with well-known theoretical devices. Assuming $u_t$ is white noise or assuming $y_t$ respects strong mixing conditions \citep{davidson1994stochastic} would require more careful consideration in asymptotic arguments but is otherwise a simple generalization, although more involved in terms of notation, see e.g. \cite{boubacarmainassaraEstimatingStructuralVARMA2011}.

Assumption B guarantees that $y_t$ has no unit roots and is stable. Of course, many setups of interest do not satisfy this assumption, the most significant ones being unit roots, cointegrated VARs, and local-to-unity settings. Incorrect identification of unit roots does not invalidate the use of LS or ML estimators \citep{phillipsRegressionTheoryNearIntegrated1988, park1988statistical, park1989statistical, sims1990inference}, however inference is significantly impacted as a result \citep{pesaventoSmallsampleConfidenceIntervals2006,mikushevaUniformInferenceAutoregressive2007,mikushevaOneDimensionalInferenceAutoregressive2012}. 

Assumption C is standard in the literature regarding penalized estimation and does not imply significant additional constraints on the process $y_t$, cf. Assumption A. It is sufficient to ensure that for large enough $T$ the plug-in sample autocovariance estimator is invertible even under vanishing $\Lambda$.

Before stating the main theorems, let
\begin{align*}
	\hat{\Gamma} &= T\inv Z Z', \\
	\hat{U} & = Y - \hat{B}^R Z, \\ 
	\hat{{\Sigma}}_u^R & = T\inv \hat{U} \hat{U}',
\end{align*}
be the regression covariance matrix, regression residuals and sample innovation covariance estimator, respectively.

\begin{theorem}\label{th:MAIN_THEOREM}
	Let Assumptions A-C hold and define $\hat{\bm{\beta}}^R(\Lambda, \bm{\beta}_0)$ be the centered RLS estimator as in \eqref{eq:RLS_b0}. 
	If $\sqrt{T} \Lambda \toprob \Lambda_0$ and $\bm{\beta}_0 \toprob \underline{\bm{\beta}}_0$, where $\Lambda_0$ is positive semi-definite diagonal matrix and $\underline{\bm{\beta}}_0$ is a constant vector, then
	\begin{itemize}
		\item[\textnormal{(a)}] $\hat{\Gamma} \toprob \Gamma$,
		\item[\textnormal{(b)}] $\hat{\bm{\beta}}^R(\Lambda, \bm{\beta}_0) \toprob \bm{\beta}$,
		\item[\textnormal{(c)}] $\hat{{\Sigma}}_u^R \toprob \Sigma_u$,
		\item[\textnormal{(d)}] $\sqrt{T} \left( \hat{\bm{\beta}}^R(\Lambda, \bm{\beta}_0) - \bm{\beta} \right) 
		\todist 
		\mathcal{N} \left( \Gamma\inv \Lambda_0 (\underline{\bm{\beta}}_0 - \bm{\beta}), \Gamma\inv \otimes \Sigma_u \right)$.
	\end{itemize}
\end{theorem}

Theorem \ref{th:MAIN_THEOREM} considers the most general case, and, as previously mentioned, gives the asymptotic distribution of $\hat{\bm{\beta}}^R$ under rather weak conditions for the regularizer $\Lambda$. The resulting normal limit distribution is clearly dependent on the unknown model parameters $\bm{\beta}$, complicating inference. 

However, it is possible -- under strengthened assumptions for $\Lambda$ or $\bm{\beta}_0$ -- for $\hat{\bm{\beta}}^R$ to have a zero-mean Gaussian limit distribution.

\begin{theorem}\label{th:Corollary_MAIN_THEOREM}
	In the setting of Theorem \ref{th:MAIN_THEOREM}, results (a)-(c) hold and (d) simplifies to
	\begin{itemize}
		\item[\textnormal{(d$'$)}] $\sqrt{T} \left( \hat{\bm{\beta}}^R(\Lambda, \bm{\beta}_0 \right) - \bm{\beta}) \todist \mathcal{N} \left( 0, \Gamma\inv \otimes \Sigma_u \right)$
	\end{itemize}
	if either
	\begin{itemize}
		\item[\textnormal{(i)}] ${\Lambda} = o_P\left( T^{-1/2} \right)$,
		\item[\textnormal{(ii)}] ${\Lambda} = O_P\left( T^{-1/2} \right)$ and $ \bm{\beta}_0 - \bm{\beta} = o_p(1)$.
	\end{itemize}
\end{theorem}

The following corollary is immediate.

\begin{corollary}
	Let $\hat{\bm{\beta}}_0$ be a consistent and asymptotically normal estimator of $\bm{\beta}$. Then, under condition \textnormal{(i)} or \textnormal{(ii)} of Theorem \ref{th:Corollary_MAIN_THEOREM} results \textnormal{(a)}-\textnormal{(d$'$)} hold.
\end{corollary}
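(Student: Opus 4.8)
The plan is to take $\bm{\beta}_0 = \hat{\bm{\beta}}_0$ in Theorem \ref{th:Corollary_MAIN_THEOREM} and simply check that a consistent, asymptotically normal estimator supplies exactly the two centering conditions appearing there (one for part (1), one for part (2)); the corollary is then a one-line consequence. Nothing beyond unwinding the definitions is needed.

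First I would use consistency: $\hat{\bm{\beta}}_0 \toprob \bm{\beta}$ means that in the notation of Theorem \ref{th:MAIN_THEOREM} the probability limit of the centering is $\underline{\bm{\beta}}_0 = \bm{\beta}$. Under hypothesis (1), $\sqrt{T}\inv \Lambda \toprob 0$, so $\Lambda_0 = 0$ and the bias term $\Gamma\inv \Lambda_0(\underline{\bm{\beta}}_0 - \bm{\beta})$ vanishes; the requirements of part (1) of Theorem \ref{th:Corollary_MAIN_THEOREM} (which are just the Theorem \ref{th:MAIN_THEOREM} assumption $\bm{\beta}_0 \toprob \underline{\bm{\beta}}_0$, now verified, together with the stated rate on $\Lambda$) are met, and (a)–(d$\,'$) follow immediately.

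For part (2) I would invoke asymptotic normality at the $\sqrt{T}$ rate: $\sqrt{T}(\hat{\bm{\beta}}_0 - \bm{\beta}) \todist \mathcal{N}(0, V)$ for some finite $V$. A sequence converging in distribution to a (tight) limit is uniformly tight, hence stochastically bounded, so $\sqrt{T}(\hat{\bm{\beta}}_0 - \bm{\beta}) = O_p(1)$. Combined with the assumed rate $T\inv \Lambda \toprob 0$, this is precisely what part (2) of Theorem \ref{th:Corollary_MAIN_THEOREM} requires, and again (a)–(d$\,'$) follow. Either hypothesis block thus suffices, which is the content of the corollary.

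I do not expect any real obstacle. The only points deserving a word of care are that ``asymptotically normal'' is being used in the sense of a limiting law after $\sqrt{T}$-scaling (so the rate matches the $\sqrt{T}$ appearing in Theorem \ref{th:Corollary_MAIN_THEOREM}), and the elementary fact that convergence in distribution implies $O_p(1)$; if one only knew $\hat{\bm{\beta}}_0$ were $\sqrt{T}$-consistent the same argument would go through verbatim for part (2). Hence the statement is indeed immediate.
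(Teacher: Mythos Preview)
Your proposal is correct and matches the paper's approach exactly: the paper states the corollary as ``immediate'' and gives no separate proof, and what you have written is precisely the one-line verification that consistency supplies the centering condition for part (1) while asymptotic normality (hence $\sqrt{T}$-tightness) supplies the $O_p(1)$ condition for part (2) of Theorem~\ref{th:Corollary_MAIN_THEOREM}. Your additional remark that mere $\sqrt{T}$-consistency would already suffice for part (2) is also accurate.
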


\subsection{Joint Inference}

To handle smooth transformations of VAR coefficients, such as impulse responses \citep{lutkepohlAsymptoticDistributionsImpulse1990}, I also derive a standard joint limit result for both $\hat{\bm{\beta}}^R$ and the variance estimator $\hat{\Sigma}_u^R$.

\begin{theorem}\label{Joint_MAIN_THEOREM}
	Let $\hat{\bm{\sigma}}^R = \vect(\hat{\Sigma}_u^R)$ and $\bm{\sigma} = \vect(\Sigma_u)$. Under the assumptions of Theorem \ref{th:MAIN_THEOREM},
	\begin{equation*}
		\sqrt{T} \begin{bmatrix}
			\hat{\bm{\beta}}^R - \bm{\beta} \\
			\hat{\bm{\sigma}}^R - \bm{\sigma}
		\end{bmatrix}
		\todist 
		\mathcal{N}\left(
		\begin{bmatrix}
			\Gamma\inv \Lambda_0 (\underline{\bm{\beta}}_0 - \bm{\beta}) \\
			0
		\end{bmatrix},
		\begin{bmatrix}
			\Gamma\inv \otimes \Sigma_u & 0 \: \\
			0 & \Omega \:
		\end{bmatrix}
		\right).
	\end{equation*}
	Under assumption (1) or (2) of Theorem \ref{th:Corollary_MAIN_THEOREM},
	\begin{equation*}
		\sqrt{T} \begin{bmatrix}
			\hat{\bm{\beta}}^R - \bm{\beta} \\
			\hat{\bm{\sigma}}^R - \bm{\sigma}
		\end{bmatrix}
		\todist 
		\mathcal{N}\left(
		0,
		\begin{bmatrix}
			\Gamma\inv \otimes \Sigma_u & 0 \: \\
			0 & \Omega \:
		\end{bmatrix}
		\right),
	\end{equation*}
	where $\Omega = \E\big[\vect(u_t u_t')\, \vect(u_t u_t')'\big] - \bm{\sigma} \bm{\sigma}'$.
\end{theorem}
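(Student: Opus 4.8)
The plan is to reduce the joint statement to a single multivariate martingale central limit theorem and then transport it through the algebraic representations of the two estimators that were established in the proof of Theorem~\ref{th:MAIN_THEOREM}.

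First I would recall the identity
\[
\sqrt{T}\big(\hat{\bm{\beta}}^R(\Lambda,\bm{\beta}_0)-\bm{\beta}\big) = \big(\hat{\Gamma}_T\otimes I_K + \Lambda/T\big)\inv\Big[\bof{w}_T + (\Lambda/\sqrt{T})(\bm{\beta}_0-\bm{\beta})\Big],
\]
where $\bof{w}_T := T^{-1/2}\vect(UZ') = T^{-1/2}\sum_{t=1}^T (Z_{t-1}\otimes I_K)u_t$ and $Z_{t-1}$ is the $t$-th column of $Z$. Under the assumptions of Theorem~\ref{th:MAIN_THEOREM} one has $\Lambda/T = (\sqrt{T}\inv\Lambda)/\sqrt{T}\toprob 0$, so $(\hat{\Gamma}_T\otimes I_K + \Lambda/T)\inv\toprob\Gamma\inv\otimes I_K$ by part~(a), while $(\Lambda/\sqrt{T})(\bm{\beta}_0-\bm{\beta})\toprob\Lambda_0(\underline{\bm{\beta}}_0-\bm{\beta})$; hence the stochastic content of $\sqrt{T}(\hat{\bm{\beta}}^R-\bm{\beta})$ is $\bof{w}_T$ up to $o_p(1)$. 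Next I would linearize the variance estimator: with $\hat U = U - (\hat B^R_{de}-B)Z$,
\[
\hat{\Sigma}^R_T = T\inv UU' - T\inv(\hat B^R_{de}-B)ZU' - T\inv UZ'(\hat B^R_{de}-B)' + (\hat B^R_{de}-B)(T\inv ZZ')(\hat B^R_{de}-B)',
\]
and since part~(d) of Theorem~\ref{th:MAIN_THEOREM} gives $\hat B^R_{de}-B = O_p(T^{-1/2})$, while $T\inv UZ' = O_p(T^{-1/2})$ and $T\inv ZZ' = O_p(1)$, every term after the first is $O_p(T^{-1})$. Therefore $\sqrt{T}(\hat{\bm{\sigma}}^R-\bm{\sigma}) = \bof{v}_T + o_p(1)$, where $\bof{v}_T := T^{-1/2}\sum_{t=1}^T(\vect(u_tu_t')-\bm{\sigma})$.

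The core step is the joint limit of $(\bof{w}_T',\bof{v}_T')'$. Taking $\mathcal{F}_{t-1}=\sigma(u_{t-1},u_{t-2},\ldots)$, the vector $Z_{t-1}$ is $\mathcal{F}_{t-1}$-measurable and, by i.i.d.\ zero-mean $u_t$, $\E[(Z_{t-1}\otimes I_K)u_t\mid\mathcal{F}_{t-1}]=0$ and $\E[\vect(u_tu_t')-\bm{\sigma}\mid\mathcal{F}_{t-1}]=0$, so the stacked summands form a martingale difference sequence. Assumption~A (finite fourth moments of $u_t$, which also makes $\Omega$ finite) together with Assumption~B (stability, hence finite fourth moments of $y_t$ and of $Z_{t-1}$) supply the moment bounds for a standard multivariate martingale difference CLT (via Cram\'er--Wold, as in the proof of Theorem~\ref{th:MAIN_THEOREM}). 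Its limiting covariance is block diagonal: the $(1,1)$ block is $\E[Z_{t-1}Z_{t-1}'\otimes u_tu_t'] = \Gamma\otimes\Sigma_u$ by independence of $Z_{t-1}$ and $u_t$ (with $T\inv\sum_t Z_{t-1}Z_{t-1}'\toprob\Gamma$ driving the conditional-variance convergence); the $(2,2)$ block is $\E[\vect(u_tu_t')\vect(u_tu_t')'] - \bm{\sigma}\bm{\sigma}' = \Omega$; and the cross block is
\[
\E\big[(Z_{t-1}\otimes I_K)u_t(\vect(u_tu_t')-\bm{\sigma})'\big] = (\E[Z_{t-1}]\otimes I_K)\,\E\big[u_t(\vect(u_tu_t')-\bm{\sigma})'\big] = 0,
\]
which vanishes precisely because $\nu_t=0$ and stability force $\E[Z_{t-1}]=0$. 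Thus $(\bof{w}_T',\bof{v}_T')'\todist\mathcal{N}(0,\diag\{\Gamma\otimes\Sigma_u,\,\Omega\})$.

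Finally I would combine the pieces: applying the continuous linear map $\diag\{\Gamma\inv\otimes I_K,\,I\}$ together with the deterministic shift $\Lambda_0(\underline{\bm{\beta}}_0-\bm{\beta})$ in the first block (Slutsky, using the first step), the $(1,1)$ block becomes $(\Gamma\inv\otimes I_K)(\Gamma\otimes\Sigma_u)(\Gamma\inv\otimes I_K)=\Gamma\inv\otimes\Sigma_u$, the cross block stays $0$, and the $(2,2)$ block is unchanged — this is the first display. For the second display, under assumption~(1) of Theorem~\ref{th:Corollary_MAIN_THEOREM} one has $\Lambda/\sqrt{T}\toprob 0$ directly, and under assumption~(2), $(\Lambda/\sqrt{T})(\bm{\beta}_0-\bm{\beta})=(\Lambda/T)\sqrt{T}(\bm{\beta}_0-\bm{\beta})=o_p(1)O_p(1)=o_p(1)$; either way the mean shift disappears. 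The main obstacle is the joint martingale CLT — in particular confirming that the cross-covariance block is exactly zero (this is where the zero-mean stability of $y_t$ is essential) and checking the conditional-variance and Lindeberg conditions for the stacked sequence; everything else is bookkeeping with $o_p$ remainders and Slutsky's theorem, largely inherited from the proof of Theorem~\ref{th:MAIN_THEOREM}.
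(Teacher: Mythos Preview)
Your proposal is correct and follows essentially the same route as the paper: reduce $\sqrt{T}(\hat{\bm{\sigma}}^R-\bm{\sigma})$ to $T^{-1/2}\sum_t(\vect(u_tu_t')-\bm{\sigma})$ via the residual expansion, stack with the score $T^{-1/2}\sum_t(Z_{t-1}\otimes I_K)u_t$, apply a martingale-difference CLT (the paper cites Hamilton's Proposition~11.2 for this step), and finish with Slutsky. Your treatment is in fact slightly more explicit than the paper's on one point: you verify directly that the off-diagonal covariance block vanishes because $\E[Z_{t-1}]=0$ under the maintained zero-mean assumption $\nu_t=0$, whereas the paper defers to Hamilton---whose argument for block-diagonality actually leans on Gaussian third moments, not assumed here---so your justification is the cleaner one.
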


This result is key as it allows, under the stated assumptions on the penalizer, to construct valid asymptotic confidence intervals and, specifically, perform impulse response inference, as done in the simulations of Section \ref{section:monte_carlo_simulations} using the Delta Method, c.f. \cite{lutkepohl2005new}.

\subsection{Cross-validation}

In practice, the choice of ridge penalty is often data-driven, and cross-validation is an very popular approach to select $\Lambda$. I now turn to the properties of CV as applied to $\hat{\bm{\beta}}^R(\Lambda)$.

For simplicity, assume that $y_t$ is an AR($p$) process, that is, $K = 1$. In this setting,
\begin{equation*}
	\hat{\bm{\beta}}^R(\Lambda) 
	= 
	\left( \frac{Z Z'}{T} + \Lambda \right)\inv \frac{Z\bof{y}}{T} ,
\end{equation*}
where $\Lambda = \diag\{\lambda_1, \ldots, \lambda_p\}$. Following \cite{Patil2021}, the \textit{prediction error} of ridge estimator $\hat{\bm{\beta}}^R(\Lambda)$ given penalty $\Lambda$ is 
\begin{equation*}\label{eq:pred_err}
	\textnormal{Err}\left( \hat{\bm{\beta}}^R(\Lambda) \right) 
	:= 
	\E_{\,\tilde{y}, \tilde{z}}\left[ \left(\widetilde{y} - \widetilde{z}' \hat{\bm{\beta}}^R(\Lambda) \right)^2 \big\vert Z, \bof{y} \right] ,
\end{equation*}
where $\widetilde{y}$ and $\widetilde{z}$ are random variables from an independent copy of $y_t$. In particular, $\widetilde{z}$ is the vector of $p$ lags of $\widetilde{y}$. 
Moreover, the error curve for $\Lambda$ is given by
\begin{equation*}
	\textnormal{err}(\Lambda) := \textnormal{Err}\left( \hat{\bm{\beta}}^R(\Lambda) \right) .
\end{equation*}
The prediction error is crucial because it allows to determine the oracle optimal penalization,
\begin{equation*}
	\Lambda^\ast := \arg\min_{\Lambda \succeq 0}\ \textnormal{err}(\Lambda) .
\end{equation*}
Clearly, $\textnormal{err}(\Lambda)$ is unavailable in practice and $\Lambda^\ast$ must be substituted with a feasible alternative. Cross-validation proposes to construct a collection of paired, non-overlapping subsets of the sample data such that the first subset of the pair (estimation set) is used to estimate the model, while the second (validation set) is used to provide an empirical estimate of the prediction error. The CV penalty is then selected to minimize the total error over validation sets. A very popular approach to build cross-validation subsets is $k$-fold CV, wherein the sample is split into $k$ blocks, so-called \textit{folds}, of sequential observations (possibly after shuffling the data). Each fold determines a validation set, and is paired with its complement, which gives the estimation set. For more details, see e.g. \cite{hastie2009elements}.

Again with the intent of keeping complexity low -- as this work is not focused on cross-validation -- I will make the additional simplifying assumption that CV is implemented with two folds and one pair.  Specifically, the first fold is the estimation set, where $Z$ and $\bof{y}$ are constructed and $\hat{\bm{\beta}}^R(\Lambda)$ is estimated. The second fold is the validation set and yields $\widetilde{Z}$, $\widetilde{\bof{y}}$, where $\widetilde{Z} \in \Real^{p \times \widetilde{T}}$ and $\widetilde{\bof{y}} \in \Real^{\widetilde{T}}$. To account for dependence, a buffer of $m$ observations between validation and estimation folds is introduced. The last observation of $y_t$ in the estimation set is $y_T$, while the first observation in validation set is $\widetilde{y}_1 := y_{T+m+1}$, that is, the total number of available observations is $T + m + \widetilde{T} + 2p + 1$. This is a stylized version of the CV setup of \cite{Burman1994} -- also called \textit{$m$-block} or \textit{non-dependent cross-validation} in \cite{bergmeirNoteValidityCrossvalidation2018} -- and is effectively equivalent to an out-of-sample (OOS) validation scheme. 
%
%
Thus, the 2-fold $m$-buffered CV error curve is
\begin{equation}\label{eq:cv_pred_err}
	\textnormal{cv2}_m(\Lambda) 
	:= 
	\frac{1}{\widetilde{T}} \sum_{s = 1}^{\widetilde{T}} \left( \widetilde{y}_s - \widetilde{z}_s' \hat{\bm{\beta}}^R(\Lambda) \right)^2 .
\end{equation}

\begin{theorem}\label{th:cv2_m_asconv_err}
	Under Assumptions A-C, for every $\Lambda$ in the cone of diagonal positive definite penalty matrices with diagonal entries in $(\lambda_{\min}, \infty)$, $\lambda_{\min} \geq 0$, it holds that 	
	\begin{equation*}
		\textnormal{cv2}_m(\Lambda) - \textnormal{err}(\Lambda) \overset{a.s.}{\longrightarrow} 0
	\end{equation*}
	as $T, \widetilde{T} \to \infty$. Furthermore, the convergence is uniform in $\Lambda$ over compact subsets of penalty matrices.
\end{theorem}

In the current setup, the joint limit $T, \widetilde{T} \to \infty$ should be thought as $\widetilde{T} / T \to \gamma \in (0, 1)$, where aspect ratio $\gamma$ determines the balance of the cross-validation split. 

\begin{remark}
	Under Assumption~C, $\omega_{K}(\hat{\Gamma}) > 0$ for $T$ large. Therefore, the bounds derived in the proof of Theorem~\ref{th:cv2_m_asconv_err} are finite even if $\Lambda = 0$. In fact, it is easily seen that the behavior of $\textnormal{err}(\Lambda)$ and $\textnormal{cv2}_m(\Lambda)$ is consistent at the endpoints $\Lambda = 0$ and $\Lambda \to \infty$, c.f. \cite{Patil2021}. Observe that
	\begin{equation*}
		\textnormal{cv2}_m(\Lambda) \to \Sigma_{u}
		\quad\textnormal{and}\quad
		\textnormal{err}(\Lambda) \to \Sigma_{u}
	\end{equation*}
	as $\Lambda \to 0$, while
	\begin{equation*}
		\textnormal{cv2}_m(\Lambda) \to \Gamma
		\quad\textnormal{and}\quad
		\textnormal{err}(\Lambda) \to \Gamma
	\end{equation*}
	as $\Lambda \to \infty$, as needed.
\end{remark}

Theorem~\ref{th:cv2_m_asconv_err} thus shows that $\textnormal{cv2}_m(\Lambda)$ gives an asymptotically valid way to evaluate the prediction error curve, and thus tune $\Lambda$, over any compact set of diagonal positive semi-definite penalization matrices.
Moreover, in Theorem~\ref{th:cv_dependence}, Appendix~\ref{section:cv_under_dependence}, I show that the impact of dependence due to the VAR data generating process is exponentially small for $m$ sufficiently large. This property of $\textnormal{cv2}_m(\Lambda)$ is desirable because it lets one choose $m$ small also in applications with moderate sample sizes, and theoretical justifies the prescription of \cite{Burman1994}.

\subsection{Asymptotically Valid CV}

So far, I have shown that a simple 2-fold CV -- or, equivalently, out-of-sample validation -- correctly estimates the predictive error of the ridge estimator, even under dependence. I turn now to the question of selecting an \textit{asymptotically valid} penalty, that is, a $\Lambda$ such that condition (1) of Theorem \ref{th:Corollary_MAIN_THEOREM} is fulfilled. This enables inference, since one is in a setting where the bias is asymptotically negligible.

The idea is to scale the ridge penalty used at the estimation step of CV by a factor $\sqrt{T}$, so that the validated penalty converges to zero at an appropriate rate as both $T$ and $\widetilde{T}$ grow. In other words, an over-smoothed ridge regression turns out to be key when studying cross-validation. To derive this result, first let
\begin{equation*}
	\hat{\bm{\beta}}^R_\blacklozenge(\Lambda) 
	:= 
	\left( \frac{Z Z'}{T} + \sqrt{T} \Lambda \right)\inv \frac{Z\bof{y}}{T}
\end{equation*}
be the \textit{over-smoothed ridge estimator}.

\begin{theorem}\label{th:asym_valid_cv_penalty}
	Under Assumptions A-C, let $\mathcal{I}_\lambda$ be the compact set of diagonal positive semidefinite penalization matrices $\Lambda$ such that $\norm{\Lambda}_{\max} \leq \lambda < \infty$. It holds
	\begin{equation*}
		\Lambda_\blacklozenge^*
		:=
		\arg\min_{\Lambda \in \mathcal{I}_\lambda}\, \textnormal{Err}\left( \hat{\bm{\beta}}^R_\blacklozenge(\Lambda) \right) 
		= 
		o_p(T^{-1/2}) .
	\end{equation*}
\end{theorem} 

\begin{remark}
	The previous theorem is stated in terms of the oracle predictive error $\textnormal{Err}\left( \hat{\bm{\beta}}^R_\blacklozenge \left( \widetilde{\Lambda} \right) \right)$, which equals the 2-fold CV error curve up to a factor of order $O_P({\widetilde{T}}^{-1/2})$. Therefore, assuming that the CV aspect ratio $\gamma$ is strictly between zero and one, the result of Theorem~\ref{th:asym_valid_cv_penalty} also directly generalizes to an empirically cross-validated penalty.
\end{remark}

\section{Inference with Shrinkage}\label{section:inference_shrinkage}

\cite{fuAsymptoticsLassotypeEstimators2000} have argued that results such as Theorems \ref{th:MAIN_THEOREM} and \ref{th:Corollary_MAIN_THEOREM} portray penalized estimators in a somewhat unfair light, because they result in asymptotic distributions showing no bias-variance trade-off. Indeed, they show that ridge shrinkage yields estimates with asymptotic variance no different than that of least squares. Of course, in finite samples shrinkage has an effect on $\Gamma\inv \otimes \Sigma_u$ since $\hat{{\Sigma}}^R_T$ is used in place of $\hat{\Sigma}^{LS}_T$ to estimate the error term variance matrix. To better understand the value of ridge penalization in practice, one should therefore consider the situation where a number of VAR coefficients are small, but not necessarily zero. 

Formally, assume that for some $0 < n \leq p$ one can partition the VAR coefficients as $\bm{\beta} = (\bm{\beta}_1', \bm{\beta}_2')'$, where $\bm{\beta}_1 \in \Real^{K^2(p-n)}$ and $\bm{\beta}_2 \in \Real^{K^2 n}$, and assume that $\bm{\beta}_2 = T^{-(1/2 + \delta)}\,\bm{b}_2$ for $\delta > 0$ and $\bm{b}_2 \in \Real^{K^2 n}$ is fixed. Such ordered partitioning of $\bm{\beta}$ is without loss of generality.\footnote{The dimensions of $\bm{\beta}_1$ and $\bm{\beta}_2$ are chosen to be multiples of $K^2$ to better conform to the lag-adapted setting. This choice is also without loss of generality and simplifies exposition.} 
In this context, it is desirable to penalize $\bm{\beta}_1$ and $\bm{\beta}_2$ differently when constructing the ridge penalty. 
From a practical perspective, one can consider, for example, the case of a VAR($\infty$) model derived by inverting a stable VARMA($p, q$) process: for $i$ sufficiently large, coefficient matrices $A_i$ decay exponentially to zero.\footnote{This result follows from a straightforward generalization of Lemma \ref{lemma:exp_decay_MA} in Appendix \ref{appendix:cross_validation}. The choice of norm to measure such decay is not fundamental as they are equivalent given that dimension $K$ is fixed.} In a finite sample, an asymptotic framework with non-negligible penalization of higher-order lag coefficients can be more appropriate than that of Theorem \ref{th:MAIN_THEOREM}. This approach to inference is also in the vein of \cite{demolForecastingUsingLarge2008a}, who argue for explicit lag penalization into BVAR priors on similar grounds.
Thus, the idea of partitioned penalization exploits prior information on the structure of autoregressive coefficients to asymptotically improve on the bias-variance trade-off. In the context of maximum-likelihood estimation, the use of appropriate and plausible model restrictions to improve efficiency by shrinkage, rather then perform hypothesis testing, has also been discussed by \cite{hansenEfficientShrinkageParametric2016}.

Let $\Lambda = \text{diag}\{(L_1', L_2')'\} \otimes I_K$ where $L_1\in \Real^{K^2 (p-n)}_+$ and $L_2 \in \Real^{K^2 n}_+$. Assume that 
\begin{equation*}
	L_1 = o_P(T^{-1/2})
	\quad\textnormal{and}\quad
	L_2 \toprob \overline{L}_2
\end{equation*}
for a fixed vector $\overline{L}_2 \in \Real^{K^2 n}_+$. In particular, letting $\Lambda_1 = \text{diag}\{L_1\}$ and $\Lambda_2 = \text{diag}\{L_2\}$,
\begin{equation}\label{eq:Lambda_1:2}
	{\Lambda} = \begin{bmatrix}
		\Lambda_1 & 0 \\
		0 & \Lambda_2
	\end{bmatrix} \otimes I_K
	\toprob
	\overline{\Lambda} \otimes I_K
	\quad
	\text{where}
	\quad
	\overline{\Lambda}
	=
	\begin{bmatrix}
		0 & 0 \\
		0 & \overline{\Lambda}_2
	\end{bmatrix}, 
	\;
	\overline{\Lambda}_2 
	\succ 0.
\end{equation}
One can now develop an asymptotic result which shows non-negligible shrinkage in the limit distribution of the ridge estimator. For simplicity of exposition, here I will assume that ridge centering $\bm{\beta}_0$ is chosen to be zero.
\begin{theorem}\label{th:MAIN_TEOREM_with_shrink}
	In the setting of Theorem \ref{th:MAIN_THEOREM}, assume that, for $0 < n \leq p$,
	\begin{itemize}
		\item[(i)] $\bm{\beta} = (\bm{\beta}_1', \bm{\beta}_2')'$ where $\bm{\beta}_1 \in \Real^{K^2(p-n)}$ and $\bm{\beta}_2 = T^{-(1/2 + \delta)}\,\bm{b}_2$ for $\delta > 0$, $\bm{b}_2 \in \Real^{K^2 n}$ fixed.
		\item[(ii)] $\Lambda = \mathrm{diag}\{(L_1', L_2')'\}$ where $L_1\in \Real^{K^2 (p-n)}_+$ and $L_2 \in \Real^{K^2 n}_+$.
		\item[(iii)] $L_1 = o_P(T^{-1/2}) $ and $L_2 \toprob \overline{L}_2$ as $T \to \infty$.
		\item[(iv)] $\bm{\beta}_0 = 0$.
	\end{itemize}
	Let $\Gamma_{\overline{\Lambda}} = \Gamma + \overline{\Lambda}$ where $\overline{\Lambda} \succeq 0$ is given by \eqref{eq:Lambda_1:2}. Then, results (a)-(c) hold and 
	\begin{itemize}
		\item[(d$\,''$)] $\sqrt{T} \left( \hat{\bm{\beta}}^R(\Lambda, \bm{\beta}_0) - \bm{\beta} \right) 
		\todist 
		\mathcal{N}\left( 0, \Gamma_{\overline{\Lambda}}\inv\, \Gamma\, \Gamma_{\overline{\Lambda}}\inv \otimes \Sigma_u \right)$
	\end{itemize}
\end{theorem}

It is easy to see that indeed the term $\Gamma_{\overline{\Lambda}}\inv\, \Gamma\, \Gamma_{\overline{\Lambda}}\inv $ in Theorem \ref{th:MAIN_TEOREM_with_shrink} is weakly smaller than $\Gamma\inv$ in the positive-definite sense. Note that
\begin{align*}
	\Gamma_{\overline{\Lambda}}\inv\, \Gamma\, \Gamma_{\overline{\Lambda}}\inv \:\preceq\: \Gamma \inv \: & \iff \: (\Gamma + \overline{\Lambda})\inv \Gamma \:\preceq\: \Gamma\inv (\Gamma + \overline{\Lambda}) \\
	& \iff \: I_{K^2p} - (\Gamma + \overline{\Lambda})\inv \overline{\Lambda} \:\preceq\: I_{K^2p} + \Gamma\inv \overline{\Lambda} \\
	& \iff \: 0 \:\preceq\: ((\Gamma + \overline{\Lambda})\inv + \Gamma\inv)\overline{\Lambda}
\end{align*}
The last inequality is true by definition of $\overline{\Lambda}$. Shrinkage gains are concentrated at the components that have non-zero asymptotic shrinkage, i.e. those penalized by $\overline{L}_2$. 

\begin{remark}
	A key point in the application of Theorem \ref{th:MAIN_TEOREM_with_shrink} is identification of $\bm{\beta}_1$ and $\bm{\beta}_2$. In practice, one may then proceed in two ways. As discussed in Section \ref{section:bayesian_ridge}, one can see the ridge approach as a frequentist ``counterpart'' to implementing a Bayesian prior. Therefore, the researcher may split $\bm{\beta}$ into subsets of small and large parameters based on economic intuition, domain knowledge or preliminary information. Alternatively, in the following section I show that cross-validation is automatically able to tune $\Lambda$ appropriately.
\end{remark}

\subsection{Cross-validation with Partitioned Coefficients}

One can use the same approach applied to derive Theorem~\ref{th:asym_valid_cv_penalty} in order to show that cross-validating the RLS estimator with $\textnormal{Err} ( \hat{\bm{\beta}}^R_\blacklozenge(\Lambda) )$ is also asymptotically valid. 

\begin{corollary}\label{corollary:asym_valid_cv_shrink}
	Consider the setup of Theorem~\ref{th:MAIN_TEOREM_with_shrink} and assume that the assumptions of Theorem~\ref{th:asym_valid_cv_penalty} are met. It holds
	\begin{equation*}
		\begin{bmatrix}
			\Lambda_{1,\blacklozenge} & 0 \\
			0 & \Lambda_{2,\blacklozenge} 
		\end{bmatrix}
		:=
		\arg\min_{\Lambda \in \mathcal{I}_\lambda}\, \textnormal{Err}\left( \hat{\bm{\beta}}^R_\blacklozenge(\Lambda) \right) 
		= 
		\begin{bmatrix}
			o_p(T^{-1/2}) & 0 \\
			0 & o_P(1)
		\end{bmatrix}
	\end{equation*}
	Moreover, any $\Lambda_{2,\blacklozenge}$ such that $0 \preceq \Lambda_{2,\blacklozenge} \preceq \lambda I$ is asymptotically valid.
\end{corollary}

In theory, one would like to be able to quantify the gains obtained in the asymptotic shrinkage setup of Theorem~\ref{th:MAIN_TEOREM_with_shrink} compared to the standard setting of Theorems~\ref{th:MAIN_THEOREM} and \ref{th:Corollary_MAIN_THEOREM}, particularly when using cross-validation. Unfortunately, it is in general hard to study the cross-validation error loss even in setups without dependence. \cite{Stephenson2021} in fact show that the ridge leave-one-out CV loss is not generally convex. This suggests that studying the behavior of CV when penalizing with a diagonal anisotropic $\Lambda$ can be a very complex task in a finite sample setup.

\section{Simulations}\label{section:monte_carlo_simulations}

To study the performance of ridge-regularized estimators, I now perform simulation exercises focused on impulse response functions (IRFs). Throughout the experiments I will consider structural impulse responses, and I assume that identification can be obtained in a recursive way \citep{kilian2017structural}, which is a widely used approach for structural shock identification in macroeconometrics.

I consider two setups:
\begin{enumerate}
	\item The three-variable VARMA(1,1) design of \cite{kilianHowReliableAre2011}, representing a small-scale macro model. I term this setup ``A''.
	\item A VAR(5) model in levels, using the model specification of \cite{giannonePriorSelectionVector2015} with the dataset of \cite{hansenSteinCombinationShrinkage2016a} consisting of $K=7$ variables in levels.\footnote{The dataset is supplied by the author at \url{https://users.ssc.wisc.edu/~bhansen/progs/var.html}. While the data provided by \cite{hansenSteinCombinationShrinkage2016a} includes releases until 2016. I do not include more recent quarterly data since this is a simulation exercise. Moreover, due to the effects of the COVID-19 global pandemic, an extended sample would likely only add data released until Q4 2019 due to overwhelming concerns of a break point.} I term this setup ``B''. For the ease of exposition, in the discussion I will tabulate results only for three variables -- real GDP, investment and federal funds rate -- but complete tables can be found in Appendix \ref{appendix:additional_tables}.
\end{enumerate}
The specification of \cite{kilianHowReliableAre2011} has already been extensively used in the literature as a benchmark to gauge the basic properties of inference methods. On the other hand, the estimation task of \cite{giannonePriorSelectionVector2015} involves more variables and a higher degree of persistence. This setting is useful to evaluate the effects of ridge shrinkage when applied to realistic macroeconomic questions. It is also a suitable test bench to compare Bayesian methods with frequentist ridge.

\paragraph{Estimators.}
For frequentist methods, I include both $\hat{\bm{\beta}}^R$ and $\hat{\bm{\beta}}^{RGLS}$ ridge estimators as well as the local projection estimator of \cite{jordaEstimationInferenceImpulse2005}. For Bayesian methods, I implement both the Minnesota prior approach of \cite{banburaLargeBayesianVector2010} with stationary prior and the hierarchical prior BVAR of \cite{giannonePriorSelectionVector2015}.\footnote{To estimate hierarchical prior BVARs I rely on the original MATLAB implementation provided by \cite{giannonePriorSelectionVector2015} on the authors' website at \url{http://faculty.wcas.northwestern.edu/gep575/GLPreplicationWeb.zip}.}
The full list of method I consider is given in Table \ref{table:methods}.
To make methods comparable, I have extended the ridge estimators to include an intercept in the regression. A precise discussion regarding the tuning of penalties and hyperparameters of all methods can be found in Supplementary Appendix \ref{section:simulation_details}.

\begin{table}[t]
	\small
	\centering
	\renewcommand{\arraystretch}{1.4}
	\linespread{0.8}\selectfont\centering
	\caption{List of Estimation Methods}
	\begin{tabular}{l c p{0.65\linewidth}}
		\toprule
		Type & Name & Description \\
		\midrule
        Frequentist & LS & Least squares estimator \\
		  & RIDGE & Ridge estimator, CV penalty \\
		& RIDGE-GLS & GLS ridge estimator, CV penalty \\
		& RIDGE-AS & Ridge estimator with asymptotic shrinkage, CV penalty \\
		& LP & Local projections with Newey-West covariance estimate \\
		\midrule
		Bayesian & BVAR-CV & Litterman-Minnesota Bayesian VAR, CV tightness prior \\
		& H-BVAR & Hierarchical Bayesian VAR of \cite{giannonePriorSelectionVector2015}\\
		\bottomrule
	\end{tabular}
	\renewcommand{\arraystretch}{1}
	\label{table:methods}
\end{table}

\subsection{Pointwise MSE}

The first two simulation designs explore the MSE performance of ridge-type estimators versus alternatives. Let $\theta_{km}(h)$ be the horizon $h$ structural IRF for variable $k$ given a unit shock from variable $m$. To compute the MSE for each $k$, define
\begin{equation*}
	\textnormal{MSE}_k(h) := \sum_{m=1}^K \E\left[ \left( \hat{\theta}_{km}(h) - \theta_{km}(h)\right)^2 \right] ,
\end{equation*}
which is the total MSE for variable $k$ over all possible structural shocks. In simulations, I use $B$ replications to estimate the expectation. All MSEs are normalized by the mean squared error of the least squares estimator.

\paragraph{Setup A.}
A time series of length $T = 200$ is generated a number $B = 10\,000$ of times for replication. All VAR estimators are computed using $p=10$ lags, while LPs include $q = 10$ regression lags. Table \ref{table:tab:ir_rootmse_1} shows relative MSEs for this design. It is important to notice that in this situation GLS ridge has remarkably low performance at horizon $h=1$ compared to other methods. The primary issue is that $\Sigma_u$ features strong correlation between components, and thus the diagonal lag-adapted structure does not shrink along the appropriate directions. This is much less prominent as the horizon increases due to the fact that impulse responses eventually decay to zero, since the underlying VARMA DGP is stationary. While there is no clear ranking, the MSE of the baseline ridge VAR estimator is in between those of the BVAR and hierarchical BVAR approaches. The degrading quality of local projection estimates are mainly due to the smaller samples available in regressions at each increasing horizon \citep{kilianHowReliableAre2011}. This behavior is one of the prime reasons behind the development of LP shrinkage estimators, like that proposed in \cite{plagborg2016essays} or the SLP estimator of \cite{barnichonImpulseResponseEstimation2019}.

\begin{table}[t!]
	\centering
	\renewcommand{\arraystretch}{1.05}
    \caption{MSE Relative to OLS -- Setup A}
	\begin{tabular}{ccccccccc}
		\toprule
		Variable & Method & $h$ = 1 & $h$ = 4 & $h$ = 8 & $h$ = 12 & $h$ = 16 & $h$ = 20 & $h$ = 24 \\
		\midrule
		& RIDGE & 0.97 & 0.74 & 0.64 & 0.64 & 0.65 & 0.63 & 0.60 \\
		Investment & RIDGE-GLS & 5.16 & 0.89 & 0.55 & 0.47 & 0.44 & 0.41 & 0.38 \\
		Growth & LP & 1.00 & 1.05 & 1.13 & 1.52 & 2.15 & 3.20 & 4.87 \\
		& BVAR-CV & 1.55 & 0.84 & 0.70 & 0.70 & 0.71 & 0.70 & 0.66 \\
		& H-BVAR & 1.80 & 0.66 & 0.53 & 0.52 & 0.54 & 0.53 & 0.50 \\
		\midrule
		& RIDGE & 0.93 & 0.78 & 0.69 & 0.68 & 0.67 & 0.64 & 0.59 \\
		Deflator & RIDGE-GLS & 2.43 & 0.83 & 0.59 & 0.52 & 0.48 & 0.44 & 0.40 \\
		& LP & 1.00 & 1.05 & 1.13 & 1.44 & 1.99 & 2.90 & 4.47 \\
		& BVAR-CV & 1.03 & 0.89 & 0.74 & 0.73 & 0.73 & 0.70 & 0.66 \\
		& H-BVAR & 1.01 & 0.70 & 0.58 & 0.56 & 0.55 & 0.53 & 0.50 \\
		\midrule
		& RIDGE & 0.94 & 0.76 & 0.66 & 0.66 & 0.66 & 0.64 & 0.60 \\
		Paper Rate & RIDGE-GLS & 1.80 & 0.87 & 0.59 & 0.52 & 0.47 & 0.43 & 0.39 \\
		& LP & 1.00 & 1.05 & 1.13 & 1.46 & 1.99 & 2.86 & 4.31 \\
		& BVAR-CV & 0.87 & 0.87 & 0.74 & 0.73 & 0.73 & 0.71 & 0.66 \\
		& H-BVAR & 0.81 & 0.69 & 0.57 & 0.55 & 0.56 & 0.54 & 0.51 \\
		\bottomrule
	\end{tabular}
	\renewcommand{\arraystretch}{1}
	\label{table:tab:ir_rootmse_1}
    ~\\[1.5em]
\end{table}

\begin{table}[h!]
	\centering
	\renewcommand{\arraystretch}{1.05}
	\caption{MSE Relative to OLS -- Setup B}
	\begin{tabular}{ccccccccc}
		\toprule
		Variable & Method & $h$ = 1 & $h$ = 4 & $h$ = 8 & $h$ = 12 & $h$ = 16 & $h$ = 20 & $h$ = 24 \\
		\midrule
		& RIDGE & 1.11 & 1.08 & 1.16 & 1.06 & 0.90 & 0.89 & 0.94 \\
		& RIDGE-GLS & 1.16 & 1.00 & 0.99 & 1.00 & 0.93 & 0.93 & 0.95 \\
		Real GDP & LP & 1.00 & 1.14 & 1.37 & 1.52 & 1.72 & 1.98 & 2.24 \\
		& BVAR-CV & 0.90 & 0.87 & 1.04 & 1.01 & 0.92 & 0.92 & 0.98 \\
		& H-BVAR & 0.83 & 0.62 & 0.78 & 0.73 & 0.62 & 0.62 & 0.68 \\
		\midrule
		& RIDGE & 1.49 & 1.27 & 1.17 & 0.99 & 0.70 & 0.73 & 1.61 \\
		& RIDGE-GLS & 1.34 & 1.14 & 1.02 & 1.02 & 0.86 & 0.82 & 0.86 \\
		Investment & LP & 1.00 & 1.15 & 1.40 & 1.63 & 2.03 & 2.76 & 3.59 \\
		& BVAR-CV & 1.51 & 1.01 & 0.97 & 0.97 & 0.93 & 1.08 & 1.24 \\
		& H-BVAR & 1.06 & 0.68 & 0.69 & 0.66 & 0.63 & 0.87 & 1.14 \\
		\midrule
		& RIDGE & 2.17 & 1.21 & 0.96 & 0.93 & 1.03 & 4.00 & 53.18 \\
		& RIDGE-GLS & 1.21 & 1.04 & 0.90 & 0.93 & 0.90 & 0.88 & 0.91 \\
		Fed Funds & LP & 1.00 & 1.18 & 1.51 & 1.71 & 1.97 & 2.44 & 2.99 \\ 
		Rate & BVAR-CV & 0.92 & 0.94 & 0.91 & 0.90 & 0.86 & 0.87 & 0.92 \\
		& H-BVAR & 0.75 & 0.77 & 1.32 & 1.38 & 1.25 & 1.15 & 1.20 \\
		\bottomrule
	\end{tabular}
	\renewcommand{\arraystretch}{1}
	\label{table:tab:ir_rootmse_2}
\end{table}

\paragraph{Setup B.} 
Using the data of \cite{hansenSteinCombinationShrinkage2016a}, I estimate and simulate a stationary but highly persistent VAR(5) model using the same sample size and number of replications of Setup A. For all methods, $p = 5$ lags are used, so that VAR estimators are correctly specified. In this setup, unlike in the previous experiment, one can clearly notice that impulse responses computed via cross-validated ridge show increasing MSE as horizon $h$ grows. There are two main reasons behind this behavior. First, the chosen setup features a very persistent data generating process, as the largest root of the underlying VAR model is $0.9945$. This means that the true IRFs revert to zero only over long horizons, while lag-adapted ridge estimates yields models with lower persistence and thus flatter impulse responses. Secondly, the dataset from \cite{hansenSteinCombinationShrinkage2016a} is not normalized, and the included series have markedly heterogenous variances. Since GLS ridge shrinks along covariance-rotated data, shrinkage is adjusted according to each series variance, unlike that baseline ridge estimator $\hat{\bm{\beta}}^R$. The MSE for the Fed Fund Rate impulse responses shows that the pointwise difference between baseline and GLS ridge can be severe for long horizon IRFs when the DGP is highly persistent. On short horizons, Bayesian estimators perform on par or better than baseline least squares estimates, while at longer horizons differences are less stark. It is, however, clear that the hierarchical prior BVAR of \cite{giannonePriorSelectionVector2015} shows the overall best results. As in the previous setup, local projections show degrading performance at larger horizons.

\subsection{Confidence Intervals}

I now try and evaluate whether ridge shrinkage has a negative impact on inference. There have also been recent contributions directly aimed at studying shrinkage effects. \cite{liLocalProjectionsVs2022} give an extensive treatment of the issue in terms of bias-variance trade-off, showing that, to a large extent, shrinkage is desirable unless bias is a primary and sensitive concern.
Using the same simulation setups as in the previous section, I investigate coverage and size properties of pointwise CIs constructed using the methods in Table \ref{table:methods}. All confidence intervals are constructed with nominal 90\% level coverage. 

In this set of simulations, I swap GLS ridge for the asymptotic shrinkage ridge estimator, $\hat{\bm{\beta}}^{R}_{as}$, see Section \ref{section:inference_shrinkage}, since the latter allows for a partially non-negligible penalization in the limit. To implement $\hat{\bm{\beta}}^{R}_{as}$, one needs to choose a partition of $\bm{\beta}$ which identifies asymptotically negligible coefficient. To do this, I split $\bm{\beta}$ by lag and penalize all coefficients with lag orders greater than a given threshold $\overline{p}$, such that $1 < \overline{p} < p$. In setup A, I choose $\overline{p} = 6$, while in setup B I set $\overline{p} = 3$.
In Bayesian methods, including the cross-validated Minnesota BVAR, I construct high-probability intervals by drawing from the posterior. Comparison between frequentist CIs and Bayesian posterior densities is not generally valid, because they are not analogous concepts. Therefore, the discussion below is intended to highlight differences in \textit{structure} between ridge approaches.

\begin{table}[t!]
	\centering
	\renewcommand{\arraystretch}{0.9}
	\caption{Impulse Response Inference -- Setup A -- CI Coverage}
	\begin{tabular}{ccccccccc}
		\toprule
		Variable & Method & $h$ = 1 & $h$ = 4 & $h$ = 8 & $h$ = 12 & $h$ = 16 & $h$ = 20 & $h$ = 24 \\
		\midrule
		& LS & 0.88 & 0.88 & 0.87 & 0.88 & 0.91 & 0.93 & 0.94 \\
		& RIDGE & 0.90 & 0.92 & 0.94 & 0.93 & 0.94 & 0.95 & 0.95 \\
		Investment & RIDGE-AS & 0.90 & 0.92 & 0.88 & 0.88 & 0.88 & 0.89 & 0.89 \\
		Growth & LP & 0.88 & 0.97 & 0.99 & 0.99 & 0.99 & 0.99 & 0.99 \\
		& BVAR-CV & 0.77 & 0.88 & 0.88 & 0.90 & 0.92 & 0.94 & 0.96 \\
		& H-BVAR & 0.72 & 0.89 & 0.89 & 0.92 & 0.93 & 0.95 & 0.96 \\
		\midrule
		& LS & 0.88 & 0.87 & 0.86 & 0.88 & 0.91 & 0.92 & 0.94 \\
		& RIDGE & 0.91 & 0.92 & 0.93 & 0.92 & 0.93 & 0.94 & 0.95 \\
		Deflator & RIDGE-AS & 0.91 & 0.91 & 0.88 & 0.88 & 0.87 & 0.87 & 0.88 \\
		& LP & 0.88 & 0.97 & 0.99 & 0.99 & 0.99 & 0.99 & 1.00 \\
		& BVAR-CV & 0.80 & 0.86 & 0.88 & 0.91 & 0.93 & 0.94 & 0.96 \\
		& H-BVAR & 0.84 & 0.88 & 0.90 & 0.92 & 0.94 & 0.95 & 0.97 \\
		\midrule
		& LS & 0.87 & 0.86 & 0.86 & 0.88 & 0.90 & 0.92 & 0.94 \\
		& RIDGE & 0.90 & 0.91 & 0.93 & 0.93 & 0.93 & 0.94 & 0.95 \\
		Paper Rate & RIDGE-AS & 0.89 & 0.90 & 0.89 & 0.88 & 0.88 & 0.88 & 0.88 \\
		& LP & 0.87 & 0.97 & 0.99 & 0.99 & 0.99 & 0.99 & 0.99 \\
		& BVAR-CV & 0.82 & 0.84 & 0.87 & 0.90 & 0.92 & 0.93 & 0.95 \\
		& H-BVAR & 0.88 & 0.88 & 0.90 & 0.92 & 0.93 & 0.95 & 0.96 \\
		\bottomrule
	\end{tabular}\\
	\renewcommand{\arraystretch}{1}
	\label{table:tab:ir_cover_1}
    ~\\[1.5em]
\end{table}

\begin{table}[h!]
	\centering
	\renewcommand{\arraystretch}{0.9}
	\caption{Impulse Response Inference -- Setup A -- CI Length}
    \begin{tabular}{ccccccccc}
		\toprule
		Variable & Method & $h$ = 1 & $h$ = 4 & $h$ = 8 & $h$ = 12 & $h$ = 16 & $h$ = 20 & $h$ = 24 \\
		\midrule
		& LS & 2.99 & 5.11 & 5.78 & 5.35 & 4.79 & 4.17 & 3.56 \\
		& RIDGE & 3.13 & 5.20 & 5.82 & 5.17 & 4.48 & 3.78 & 3.09 \\
		Investment & RIDGE-AS & 3.11 & 5.15 & 4.84 & 4.33 & 3.70 & 3.06 & 2.48 \\
		Growth & LP & 2.99 & 7.50 & 10.97 & 12.89 & 13.99 & 14.55 & 14.70 \\
		& BVAR-CV & 2.84 & 4.48 & 4.70 & 4.38 & 3.99 & 3.56 & 3.11 \\
		& H-BVAR & 2.71 & 4.20 & 4.50 & 4.29 & 3.96 & 3.56 & 3.13 \\
		\midrule
		& LS & 1.19 & 1.92 & 2.23 & 2.14 & 1.94 & 1.71 & 1.46 \\
		& RIDGE & 1.24 & 1.97 & 2.25 & 2.09 & 1.84 & 1.54 & 1.26 \\
		Deflator & RIDGE-AS & 1.24 & 1.95 & 1.95 & 1.78 & 1.52 & 1.25 & 1.01 \\
		& LP & 1.19 & 3.03 & 4.56 & 5.42 & 5.90 & 6.14 & 6.21 \\
		& BVAR-CV & 1.03 & 1.69 & 1.87 & 1.80 & 1.67 & 1.50 & 1.31 \\
		& H-BVAR & 1.01 & 1.64 & 1.83 & 1.79 & 1.67 & 1.51 & 1.33 \\
		\midrule
		& LS & 0.97 & 1.42 & 1.64 & 1.57 & 1.44 & 1.27 & 1.09 \\
		& RIDGE & 1.01 & 1.44 & 1.65 & 1.53 & 1.36 & 1.16 & 0.95 \\
		Paper Rate & RIDGE-AS & 1.01 & 1.43 & 1.42 & 1.31 & 1.13 & 0.94 & 0.77 \\
		& LP & 0.97 & 2.19 & 3.28 & 3.90 & 4.26 & 4.43 & 4.48 \\
		& BVAR-CV & 0.84 & 1.22 & 1.35 & 1.30 & 1.21 & 1.09 & 0.96 \\
		& H-BVAR & 0.85 & 1.21 & 1.34 & 1.31 & 1.22 & 1.10 & 0.97 \\
		\bottomrule
	\end{tabular}
	\renewcommand{\arraystretch}{1}
	\label{table:tab:ir_length_1}
\end{table}

\paragraph{Setup A.}

Simulations with the DGP of \cite{kilianHowReliableAre2011}, presented in Tables \ref{table:tab:ir_cover_1} and \ref{table:tab:ir_length_1}, highlight some of the advantages of applying ridge when performing inference. Focusing on estimator $\hat{\bm{\beta}}^R$, it is clear that CI coverage is in fact higher than the intervals obtained by least squares estimation in all situations. At impact, ridge CIs are larger than the LS baseline, but they shrink as horizons increase. Thus, is IRFs revert relatively quickly to zero, ridge can effectively reduce length while preserving coverage. As discussed in Section \ref{section:ridge_shrinkage}, these gains are inherently local to the DGP -- shrinkage to zero at deep lags embodies correct prior knowledge of a weakly persistent process. For Bayesian estimators, one can note that quantile intervals at small horizons tend to be shorter compared to least squares and ridge methods.

\paragraph{Setup B.}

The effects of ridge shrinkage on a DGP with high persistence are much more severe, as shown in Tables \ref{table:tab:ir_cover_2} and \ref{table:tab:ir_length_2}. Focusing on frequentist ridge, one can observe that close to impact ($h = 1$) ridge has similar or even higher coverage than other methods for real GDP\footnote{This also is the case with consumption and compensation, c.f. Tables \ref{table:tab:ir_cover_2_ext} and \ref{table:tab:ir_length_2_ext} in Appendix \ref{appendix:additional_tables}.} 
However, as the IRF horizon grows, shrinkage often leads to severe undercoverage, with asymptotic shrinkage estimator $\hat{\bm{\beta}}^{R}_{as}$ giving the worst results. In comparison, Bayesian methods are much more reliable at all horizons, although the only estimator that can consistently improve upon the benchmark least squares VAR CIs is the hierarchical prior BVAR of \cite{giannonePriorSelectionVector2015}. The reason behind this is simple enough: the implementation of the Minnesota-prior BVAR I have used here has a white noise prior on all variables, which in this case is far from the truth. Indeed, \cite{banburaLargeBayesianVector2010} implement the same BVAR by tuning the prior to a random walk for very persistent variables in their applications. In this sense the cross-validated BVAR considered -- which is assumed centered at zero -- is really the flip-side of ridge estimators. Therefore, the addition of a prior on the mean of the autoregressive parameters as done by \cite{giannonePriorSelectionVector2015} is a key element to perform shrinkage in high persistence setups in a way that does not systematically undermine asymptotic inference on impulse responses. 

\begin{table}[t!]
	\centering
	\renewcommand{\arraystretch}{0.9}
	\caption{Impulse Response Inference -- Setup B: CI Coverage}
	\vspace{-3pt}
	\begin{tabular}{ccccccccc}
		\toprule
		Variable & Method & $h$ = 1 & $h$ = 4 & $h$ = 8 & $h$ = 12 & $h$ = 16 & $h$ = 20 & $h$ = 24 \\
		\midrule
		& LS & 0.87 & 0.81 & 0.75 & 0.72 & 0.71 & 0.72 & 0.73 \\
		& RIDGE & 0.90 & 0.79 & 0.66 & 0.62 & 0.65 & 0.68 & 0.68 \\
		Real GDP & RIDGE-AS & 0.89 & 0.72 & 0.61 & 0.58 & 0.61 & 0.65 & 0.65 \\
		& LP & 0.87 & 0.93 & 0.94 & 0.94 & 0.93 & 0.93 & 0.91 \\
		& BVAR-CV & 0.70 & 0.71 & 0.63 & 0.64 & 0.71 & 0.75 & 0.76 \\
		& H-BVAR & 0.84 & 0.86 & 0.76 & 0.76 & 0.83 & 0.88 & 0.88 \\
		\midrule
		& LS & 0.87 & 0.82 & 0.76 & 0.73 & 0.75 & 0.82 & 0.87 \\
		& RIDGE & 0.85 & 0.79 & 0.65 & 0.62 & 0.73 & 0.80 & 0.81 \\
		Investment & RIDGE-AS & 0.82 & 0.69 & 0.59 & 0.57 & 0.68 & 0.77 & 0.77 \\
		& LP & 0.87 & 0.94 & 0.94 & 0.95 & 0.94 & 0.94 & 0.94 \\
		& BVAR-CV & 0.70 & 0.73 & 0.67 & 0.71 & 0.77 & 0.81 & 0.83 \\
		& H-BVAR & 0.80 & 0.86 & 0.81 & 0.82 & 0.87 & 0.88 & 0.88 \\
		\midrule
		& LS & 0.85 & 0.83 & 0.80 & 0.78 & 0.77 & 0.79 & 0.80 \\
		& RIDGE & 0.79 & 0.77 & 0.74 & 0.68 & 0.68 & 0.72 & 0.72 \\
		Fed Funds & RIDGE-AS & 0.78 & 0.66 & 0.68 & 0.64 & 0.64 & 0.68 & 0.69 \\
		Rate & LP & 0.85 & 0.94 & 0.96 & 0.96 & 0.95 & 0.94 & 0.93 \\
		& BVAR-CV & 0.76 & 0.72 & 0.76 & 0.77 & 0.77 & 0.81 & 0.83 \\
		& H-BVAR & 0.87 & 0.86 & 0.74 & 0.73 & 0.78 & 0.84 & 0.87 \\
		\bottomrule
	\end{tabular}
	\renewcommand{\arraystretch}{1}
	\label{table:tab:ir_cover_2}
    ~\\[1.5em]
\end{table}

\begin{table}[h!]
	\centering
	\renewcommand{\arraystretch}{0.9}
	\caption{Impulse Response Inference -- Setup B: CI Length (rescaled $\times 100$)}
	\vspace{-3pt}
	\begin{tabular}{ccccccccc}
		\toprule
		Variable & Method & $h$ = 1 & $h$ = 4 & $h$ = 8 & $h$ = 12 & $h$ = 16 & $h$ = 20 & $h$ = 24 \\
		\midrule
		& LS & 0.71 & 1.56 & 2.07 & 2.31 & 2.32 & 2.24 & 2.15 \\
		& RIDGE & 0.79 & 1.56 & 1.85 & 1.95 & 1.92 & 1.85 & 1.77 \\
		Real GDP & RIDGE-AS & 0.74 & 1.31 & 1.65 & 1.76 & 1.75 & 1.70 & 1.64 \\
		& LP & 0.71 & 2.42 & 4.21 & 5.40 & 5.90 & 5.91 & 5.70 \\
		& BVAR-CV & 0.53 & 1.23 & 1.74 & 2.00 & 2.10 & 2.13 & 2.15 \\
		& H-BVAR & 0.58 & 1.36 & 1.87 & 2.16 & 2.32 & 2.44 & 2.55 \\
		\midrule
		& LS & 3.38 & 6.65 & 7.89 & 7.89 & 7.31 & 6.69 & 6.18 \\
		& RIDGE & 3.79 & 6.81 & 6.93 & 6.46 & 5.79 & 5.19 & 4.73 \\
		Investment & RIDGE-AS & 3.59 & 5.57 & 6.11 & 5.77 & 5.21 & 4.72 & 4.34 \\
		& LP & 3.37 & 10.16 & 16.00 & 18.85 & 19.06 & 18.22 & 17.23 \\
		& BVAR-CV & 2.64 & 5.26 & 6.59 & 6.91 & 6.78 & 6.57 & 6.38 \\
		& H-BVAR & 2.89 & 5.74 & 7.08 & 7.54 & 7.63 & 7.60 & 7.58 \\
		\midrule
		& LS & 0.25 & 0.39 & 0.43 & 0.43 & 0.41 & 0.38 & 0.35 \\
		& RIDGE & 0.29 & 0.39 & 0.37 & 0.36 & 0.33 & 0.30 & 0.29 \\
		Fed Funds & RIDGE-AS & 0.27 & 0.31 & 0.33 & 0.32 & 0.30 & 0.28 & 0.27 \\
		Rate & LP & 0.25 & 0.59 & 0.88 & 1.01 & 1.05 & 1.03 & 0.98 \\
		& BVAR-CV & 0.21 & 0.31 & 0.36 & 0.37 & 0.36 & 0.35 & 0.34 \\
		& H-BVAR & 0.23 & 0.36 & 0.42 & 0.44 & 0.45 & 0.45 & 0.46 \\
		\bottomrule
	\end{tabular}
	\renewcommand{\arraystretch}{1}
	\label{table:tab:ir_length_2}
\end{table}

\section{Conclusion}\label{sect:end}

In this paper, I have studied ridge regression and its application to vector autoregressive model estimation in detail. This appears to be the first work that provides a thorough analysis of ridge penalization in the context of time series data, including geometric as well as asymptotic properties. I have also derived results on the validity of cross-validation as a method to select the penalty intensity in practice, and I have shown that CV produces asymptotically valid penalization rates. Finally, I have compared both frequentist and Bayesian ridge formulation in simulations aimed at quantifying the applicability of ridge for impulse response inference.

The key takeaway of this work is that ridge penalization is a useful approach to VAR estimation as long as the chosen penalty structure is well-adapted to the models' structure. Bayesian ridge posteriors are especially flexible, with hierarchical priors also allowing shrinkage towards non-zero coefficient vectors. However, it is important to note that the Bayesian approach also permits the researcher to specify uninformative priors, so that the influence of the priors' hyperparameters is less pronounced. This is not the case in frequentist ridge, c.f. including an explicit non-zero centering vector. However, prior knowledge or a pre-estimation procedure may be available to the researcher, so that ridge can be effectively implemented without the need to implement a BVAR.

To conclude, there are still avenues of research regarding ridge which would be interesting to develop. First and foremost, the high-dimensional setup, for which, however, it seems non-trivial to find a domain of applicability, as discussed in the introduction. Secondly, a more in-depth analysis of cross-validation, especially in the multivariate case, would be extremely valuable. Moreover, both the latter and former topics should be jointly addressed in the context of mild cross-sectional dimension growth, i.e. $K \to \infty$ such that $K / T \to \rho \in (0, 1)$, which is comparable to factor model setups. 


\paragraph{Declaration of Interest Statement.}
The author reports there are no competing interests to declare.


\newpage

{\small
	\onehalfspacing
	\bibliography{./ref_ridge_var_2023}
}

\appendix
\newpage
\setcounter{page}{1}

{\LARGE\hfill Appendix \hfill}

\section{Preliminaries}

\subsection{LS and RLS Estimators.}\label{Supp_Appendix:LS_and_RLS_Estimators}

\cite{lutkepohl2005new}, Chapter 3, shows that the multivariate least squares and GLS estimator of parameter vector $\bm{\beta}$ is given by
\begin{equation*}
	\hat{\bm{\beta}} = ((Z' Z)\inv Z \otimes I_K)\bof{y}
\end{equation*}
as the minimizer of $S(\bm{\beta}) = T\inv \trace[(Y - BZ)' \Sigma_u (Y - BZ)]$. The multivariate ridge-regularized least squares (RLS) -- or, simply, ridge -- estimator considered in this paper is defined to be the minimizer of the regularized problem,
\begin{align*}
	S^R(\bm{\beta}; \Lambda) 
	& = T\inv \trace[(Y - BZ)' (Y - BZ)] + \trace[B' \Lambda B] \\
	& = \frac{\bof{y}'\bof{y}}{T} + \bm{\beta}'\left( \frac{Z Z'}{T} \otimes I_K \right)\bm{\beta} - 2 \bm{\beta}' \frac{(Z \otimes I_K)\bof{y}}{T} + \bm{\beta}' \Lambda \bm{\beta}
\end{align*} 
The first partial derivative,
\begin{equation*}
	\frac{\partial S^R(\bm{\beta}; \Lambda)}{\partial \bm{\beta}} 
	= 
	2 \left( \frac{Z Z'}{T} \otimes I_K \right)\bm{\beta} - 2 \frac{(Z \otimes I_K)\bof{y}}{T} + 2 \Lambda \bm{\beta}
\end{equation*}
gives the normal equations $( T\inv Z Z' \otimes I_K + \Lambda)\bm{\beta} = T\inv (Z \otimes I_K)\bof{y}$. The Hessian $\partial^2 S^R(\bm{\beta})/\partial^2 \bm{\beta} = 2 (T\inv Z Z' \otimes I_K + \Lambda)\bm{\beta}$ is positive definite when $\Lambda > 0$, thus indeed the minimum is achieved by
\begin{equation*}
	\hat{\bm{\beta}}^R(\Lambda) 
	= 
	\left( \frac{Z Z'}{T} \otimes I_K + \Lambda \right)\inv \frac{(Z \otimes I_K)\bof{y}}{T}
\end{equation*}
Identical derivations prove that re-centering the ridge penalty at $\bm{\beta}_0 \in \Real^{K^2p}$ produces the estimator
\begin{equation*}
	\hat{\bm{\beta}}^R(\Lambda, \bm{\beta}_0) 
	= 
	\left( \frac{Z Z'}{T} \otimes I_K + \Lambda \right)\inv \left( \frac{(Z \otimes I_K)\bof{y}}{T} + \Lambda \bm{\beta}_0 \right)
\end{equation*}

\subsection{Structure of the Regularization Matrix}\label{Supp_Appendix:Structure_Lambda}

The vectorized RLS estimator $\hat{\bm{\beta}}^R(\Lambda)$ has maximal flexibility in terms of the regularization structure that matrix $\Lambda = \diag\{\lambda_{1,1}, \ldots, \lambda_{K,p}\}$ $(K^2 p \times K^2 p)$ imposes. Since $\bm{\beta}$ contains all the coefficients of $(A_1, \ldots, A_p)$ it is indeed possible to individually penalize each lag of each series differently. In fact, by relaxing the assumption that $\Lambda$ be a diagonal matrix, even more general penalization structures are possible, although I do not consider them in this paper.

An interesting special case arises if the RLS estimator is instead written in its matrix form\footnote{
	For details in the least squares case, see again \cite{lutkepohl2005new}, Chapter 3. The derivations for the ridge estimator are identical.
},
\begin{equation*}
	\hat{B}^R_{\textnormal{mat}}(\Lambda_{Kp}) = \frac{Y Z'}{T} \left( \frac{Z Z'}{T} + \Lambda_{Kp} \right)\inv	
\end{equation*}
where here it is of note that $\Lambda_{Kp} > 0$ has size $(Kp \times Kp)$. The regularization structure imposed is different in general than that in $\hat{\bm{\beta}}^R(\Lambda)$: $\Lambda_{Kp}$ induces \textit{column-wise} ridge regularization, which penalizes coefficient estimates uniformly over each of the $Kp$ columns of $B$. 
The associated vectorized estimator then simplifies:
\begin{align*}
	\hat{\bm{\beta}}^R(\Lambda_{Kp}) 
	& = \left( \left( \frac{Z Z'}{T} + \Lambda_{Kp} \right) \otimes I_K \right)\inv (Z \otimes I_K)\bof{y} \\
	& = \left( \left( \frac{Z Z'}{T} + \Lambda_{Kp} \right)\inv \frac{Z \otimes I_K}{T} \right)\bof{y}
\end{align*}
On the other hand, the \textit{devectorized} RLS estimator is given by
\begin{equation*}
	\hat{B}^R(\Lambda_{K^2p}) = \text{reshape}(\bm{\beta}^R(\Lambda), K, Kp)
\end{equation*}
that is, $\hat{B}^R$ is simply a restructuring of the vectorized estimator into a matrix with identical dimensions to $B$. Importantly then, $\hat{B}^R(\Lambda_{K^2p})$ is equivalent to $\hat{B}^R(\Lambda_{Kp})$ if $\Lambda_{K^2p} = \Lambda_{Kp} \otimes I_K$. Because $\bm{\beta}^R(\Lambda_{K^2p})$ and $\hat{B}^R(\Lambda_{K^2p})$ allow for the most generality in penalization structure, I will consider them to be the reference RLS estimators, so the dimension subscript to $\Lambda$ will be dropped unless explicitly required.

\subsection{Autocovariance and Asymptotic Conditioning}\label{Autocov_Matrix_Conditioning}

The conditioning of the autocovariance $\Sigma_y = \E[ y_t y_t']$ is an important measure for the role that the regularization in the RLS estimator should be playing. This in turn depends on the eigenvalues of $\hat{\Sigma}_y$ with respect to those of $\Sigma_y$. \cite{hoerlApplicationNonorthogonal1970} showed in the linear regression setting that, when the sample covariance matrix deviates significantly from the identity matrix, its small eigenvalues excessively inflate the variance of least squares estimates, even though the regression problem itself is well-posed. This fragility is inherently a byproduct of finite sampling, and partially due to numerical procedures. Nowadays, numerical precision is virtually not a concern anymore, as robust linear algebra procedures are implicitly implemented in most scientific languages and toolboxes. Yet estimation issues tied to small or unfavorable data samples remain extremely relevant from both theoretical and practical viewpoints. 

In the spirit of ridge as a regularization procedure, the following Lemma establishes convergence in probability of the ordered eigenvalues of the sample autocovariance matrix.

\begin{lemma_apx}
	If $\hat{\Sigma}_y = T\inv \sum_{t=1}^{T-1} y_t y_t' \toprob \Sigma_y$ where $\Gamma \in \Real^{K\times K}$ is positive definite, then 
	\begin{equation*}
		\omega_j\left( \hat{\Sigma}_y \right) \toprob \omega_j(\Sigma_y)
	\end{equation*}
	where $\omega_j(A)$ is the $j$ largest eigenvalue of $A$.
\end{lemma_apx}
\begin{proof}
	
	First, recall that for all matrices $A \in \Real^{K\times K}$, the determinant $\det(A)$ is clearly a continuous mapping\footnote{This follows from $\det(A_{i,j}) = \sum_{\varsigma} \text{sign}(\varsigma) \prod_{i=1}^{K} A_{\varsigma(i),i} $ for permutation $\varsigma$ over $\{1,\ldots,K\}$}. Furthermore, for any polynomial $g(z) = z^n + a_1 z^{n-1} + \ldots + a_n$, $a_i \in \Complex$ factored as $g(z) = (z - w_1) \cdots (z - w_n)$, $w_i \in \Complex$, where the ordering of roots $w_i$ is arbitrary, it holds that for any $\epsilon > 0$ there exists $\delta > 0$ such that for every polynomial $h(z) =  z^n + b_1 z^{n-1} + \ldots + b_n$ with $\abs{a_i - b_i} < \delta$ decomposed as $g(z) = (z - \overline{w}_1) \cdots (z - \overline{w}_n)$, $\abs{w_i - \overline{w}_i} < \epsilon$, $i = 1, \ldots, n$, see \cite{whitney1972complex}, Appendix V.4. This in particular implies that the roots of the characteristic polynomial of matrix $A$ are continuous functions of its coefficients.
	
	Let $\varrho_{\hat{\Sigma}_y}(z) = z^K + a_1 z^{K-1} + \ldots + a_K = (z - \hat{\omega}_{1}) \cdots (z - \hat{\omega}_{K})$ and $\varrho_{\Sigma_y}(z) = z^K + b_1 z^{K-1} + \ldots + b_K = (z - {\omega}_{1}) \cdots (z - {\omega}_{K})$ be the (real) characteristic polynomials of $\hat{\Sigma}_y$ and $\Sigma_y$ respectively. Because of the continuity arguments above, for every $\epsilon > 0$ there exist $\delta_1$, $\delta_2 > 0$ such that
	\begin{align*}
		\Prob(\abs{\hat{\omega}_{i} - \omega_{i}} > \epsilon) 
		& \leq \Prob(\abs{a_i - b_i} > \delta_1) \\
		& \leq \Prob(\norm{\hat{\Sigma}_y - \Sigma_y} > \delta_2)
	\end{align*}
	for $i \in \{1, \ldots, K\}$. Since by assumption $\hat{\Sigma}_y \toprob \Sigma_y$, the RHS of the above converges to zero as $T \to \infty$, thus $\hat{\omega}_i \toprob \omega_i$.
\end{proof}

\section{Proofs}\label{appendix:proofs}

\subsection{Shrinkage}\label{Appendix_Shrinkage_Proofs}

%

\paragraph{Proof of Proposition \ref{prop:lag_adapt_shrink}}

\begin{proof}
	Notice that, by introducing $\Lambda_p := \text{diag}\{\lambda_1, \ldots, \lambda_p\}$, any lag-adapted regularization matrix can be written as $\Lambda^{(p)} = \Lambda_p \otimes I_{K^2} = (\Lambda_{p} \otimes I_K) \otimes I_K$, so that
	\begin{align*}
		\hat{\bm{\beta}}^R(\Lambda^{(p)}_i) 
		& = \left[ (Z Z' + \Lambda_{p,i} \otimes I_K) \otimes I_K \right]\inv (Z \otimes I_K)\bof{y} \nonumber \\ 
		& = \left[ (Z Z' + \Lambda_{p,i} \otimes I_K)\inv \otimes I_K \right] (Z \otimes I_K)\bof{y}
	\end{align*}
	by the properties of Kronecker product. It is now possible to derive the statements of the proposition as follows:
	\begin{itemize}
		\item[(a)] The result regarding isotropic regularizer $\Lambda^{(p)} = \lambda I_{K^2 p}$ is trivial given Proposition \ref{prop:iso_shrink}. 
		\item[(b)] Without loss of generality due to the ordering of lags in $Z$, one may write the Gram matrix $Z Z'$ in a block fashion,
		\begin{equation*}
			Z Z' + \Lambda_p = 
			\begin{bmatrix}
				(Z Z')_{[\mathcal{S}]} + \Lambda_{[\mathcal{S}]} & D \\
				D' & (Z Z')_{[\mathcal{S}^\mathnormal{c}]} + \Lambda_{[\mathcal{S}^\mathnormal{c}]} 
			\end{bmatrix}
		\end{equation*}
		where $(Z Z')_{[\mathcal{S}^\mathnormal{c}]}$ is the sub-matrix containing all the components \textit{not} indexed by subset $\mathcal{S}$, and the subscript has been dropped from $\Lambda_p$ for ease of notation. \\
		Define $A_{[\mathcal{S}]} = (Z Z')_{[\mathcal{S}]} + \Lambda_{[\mathcal{S}]}$, $B_{[\mathcal{S}^\mathnormal{c}]} = (Z Z')_{[\mathcal{S}^\mathnormal{c}]} + \Lambda_{[\mathcal{S}^\mathnormal{c}]}$ and $\varDelta = (B_{[\mathcal{S}^\mathnormal{c}]} - D'A_{[\mathcal{S}]}\inv D)$. The matrix block-inversion formula yields
		\begin{equation*}
			(Z Z' + \Lambda_p)\inv =
			\begin{bmatrix}
				A_{[\mathcal{S}]}\inv + A_{[\mathcal{S}]}\inv D \varDelta\inv D' A_{[\mathcal{S}]}\inv & A_{[\mathcal{S}]}\inv D \varDelta\inv \\
				- \varDelta\inv D' A_{[\mathcal{S}]}\inv & \varDelta\inv 
			\end{bmatrix}
			.
		\end{equation*}
		If $\Lambda_{[\mathcal{S}]} \to 0$ and $\Lambda_{[\mathcal{S}^\mathnormal{c}]} \to \infty$, then $A_{[\mathcal{S}]} \to (Z Z')_{[\mathcal{S}]}$, $B_{[\mathcal{S}^\mathnormal{c}]} \to \infty$. Therefore $\varDelta\inv \to 0$, since for $\Lambda_{[\mathcal{S}^\mathnormal{c}]}$ sufficiently large $\norm{B_{[\mathcal{S}^\mathnormal{c}]}\inv D'A_{[\mathcal{S}]}\inv D} < 1$ and thus the Sherman-Morrison-Woodbury formula implies
		\begin{equation*}
			\norm{ (B_{[\mathcal{S}^\mathnormal{c}]} - D'A_{[\mathcal{S}]}\inv D)\inv } \leq \frac{\norm{B_{[\mathcal{S}]}\inv}}{1 - \norm{B_{[\mathcal{S}^\mathnormal{c}]}\inv D'A_{[\mathcal{S}]}\inv D}} \to 0.
		\end{equation*}
		The above results finally yield 
		\begin{equation*}
			\left[ (Z Z' + \Lambda_p)\inv Z \otimes I_K \right] \bm{y} \to 
			\begin{bmatrix}
				(Z Z')_{[\mathcal{S}]} & 0 \\
				0 & 0
			\end{bmatrix}
			(Z \otimes I_K) \bm{y}
			= \hat{\bm{\beta}}^{LS}_{[\mathcal{S}]}
		\end{equation*}
		as required.
	\end{itemize}
\end{proof}

\subsection{Ridge Asymptotic Theory}\label{Appendix_Asymptotic_Proofs}

\paragraph{Proof of Theorem~\ref{th:MAIN_THEOREM}}

\begin{proof}
	\begin{itemize}
		\item[(a)] Assumptions A-B imply directly that $\hat{\Gamma}_T$ is a consistent estimator for $\Gamma$: in particular, $y_t$ is a stationary, stable and ergodic VAR process.
		\item[(b)] Rewriting $\hat{\bm{\beta}}^R(\Lambda, \bm{\beta}_0)$ yields
		\begin{align*}
			\hat{\bm{\beta}}^R(\Lambda, \bm{\beta}_0)
			& = \left( \frac{Z Z'}{T} \otimes I_K + \Lambda \right)\inv \left[ T\inv (Z \otimes I_K)((Z' \otimes I_K) \bm{\beta} + \bof{u}) + \Lambda \bm{\beta}_0 \right] \\
			& = \left( \frac{Z Z'}{T} \otimes I_K + \Lambda \right)\inv \left[ \left( \frac{Z Z'}{T} \otimes I_K \right) \bm{\beta} + \frac{(Z \otimes I_K)\bof{u}}{T} + \Lambda \bm{\beta}_0 \right] \\
			& = \left( \frac{Z Z'}{T} \otimes I_K + \Lambda \right)\inv \left[ \left( \frac{Z Z'}{T} \otimes I_K + \Lambda \right) \bm{\beta} + \frac{(Z \otimes I_K)\bof{u}}{T} + \Lambda (\bm{\beta}_0 - \bm{\beta}) \right] \\
			& = \bm{\beta} + \left( \frac{Z Z'}{T} \otimes I_K + \Lambda \right)\inv  \Lambda (\bm{\beta}_0 - \bm{\beta}) + \left( \frac{Z Z'}{T} \otimes I_K + \Lambda \right)\inv \frac{(Z \otimes I_K)\bof{u}}{T} .
		\end{align*}
		I study the last two terms of the last equality separately. The first term is $o_p(1)$,
		\begin{equation*}
			\left( \left(\frac{Z Z'}{T}\right) \otimes I_K + \Lambda \right)\inv \Lambda (\bm{\beta}_0 - \bm{\beta}) 
			= \left( \left(\frac{Z Z'}{T}\right) \otimes I_K + o_p(1) \right)\inv o_p(1) (\bm{\beta}_0 - \bm{\beta}) 
			\toprob {0}
		\end{equation*}
		since $(\bm{\beta} - \bm{\beta}_0) = (\bm{\beta} - \underline{\bm{\beta}}_0) + (\underline{\bm{\beta}}_0 - \bm{\beta}_0) = (\bm{\beta} - \underline{\bm{\beta}}_0) + o_p(1)$. Considering the matrix sequence
		\begin{equation*}
			\bm{\zeta}_T = \left[ T\inv (Z Z'), T\inv \Lambda \right] ,
		\end{equation*}
		which under Assumptions B and D.(ii) converges in probability to $[\Gamma, 0]$, by the continuous mapping theorem \citep{davidson1994stochastic} the second term gives
		\begin{equation*}
			\left( \left(\frac{Z Z'}{T}\right) \otimes I_K + o_p(1) \right)\inv \left(\frac{1}{T}(Z \otimes I_K)\bof{u} \right) \toprob \Gamma\inv \, \E[(Z \otimes I_K)\bof{u}] = {0}
		\end{equation*}
		under Assumption A.
		\item[(c)] The residuals $\hat{U}$ can be written as
		\begin{equation*}
			\hat{U} = Y - \hat{B}^R Z = B Z + U - \hat{B}^R Z = U + (B - \hat{B}^R) Z
		\end{equation*}
		Thus 
		\begin{equation}
			\frac{\hat{U} \hat{U}'}{T} = \frac{U U'}{T} +  (B - \hat{B}^R) \left(\frac{Z Z'}{T}\right) (B - \hat{B}^R)' + (B - \hat{B}^R) \left(\frac{Z U'}{T}\right) + \left(\frac{U Z'}{T}\right) (B - \hat{B}^R)'
		\end{equation}
		From (a) one finds that $\vect(B) - \vect(\hat{B}^R) = \bm{\beta} - \hat{\bm{\beta}}^R = o_p(1)$, so $(B - \hat{B}^R) \toprob 0$, while $T\inv (Z Z') \toprob \E[Z Z']$ and $T\inv (Z U') \toprob \E[Z U'] = 0$: the terms involving these quantities then vanish asymptotically. Lastly, the first term of the sum gives
		\begin{equation*}
			\frac{U U'}{T} = \frac{1}{T} \sum_{t=1}^T u_t u_t' \toprob \E[u_t u_t'] = \Sigma_u
		\end{equation*}
		for $T \to \infty$ under Assumptions A and B.
		\item[(d)] With the same expansion used in (b), 
		\begin{equation*}
			\sqrt{T}(\hat{\bm{\beta}}^R(\Lambda, \bm{\beta}_0) - \bm{\beta}) 
			= 
			Q_T\inv \left( \sqrt{T} \Lambda \right) (\bm{\beta}_0 - \bm{\beta}) + Q_T\inv \left(\frac{1}{\sqrt{T}}(Z \otimes I_K)\bof{u} \right)
		\end{equation*}
		where $Q_T = (T\inv Z Z' + \Lambda) \toprob \Gamma$. Following the arguments above, the first term in the sum converges in probability,
		\begin{equation*}
			Q_T\inv \left( \sqrt{T} \Lambda \right) \left( \underline{\bm{\beta}}_0 - \bm{\beta} + o_p(1) \right) 
			\toprob 
			\Gamma\inv \Lambda_0 (\bm{\beta} - \underline{\bm{\beta}}_0)
		\end{equation*}
		The second term has normal limiting distribution,
		\begin{equation*}
			Q_T\inv \left(\frac{1}{\sqrt{T}}(Z \otimes I_K)\bof{u} \right) \todist \mathcal{N}(0, \Gamma\inv \otimes \Sigma_u)
		\end{equation*}
		see \cite{lutkepohl2005new}, Proposition 3.1. By Slutky's theorem claim (d) follows.
	\end{itemize}
\end{proof}

\paragraph{Proof of Theorem~\ref{th:Corollary_MAIN_THEOREM}}

\begin{proof}
	\begin{itemize}
		\item[(1)] Since condition (i) implies that $\sqrt{T} \Lambda \toprob 0$, results (a)-(c) are unchanged, while (d) now involves the limit
		\begin{equation*}
			Q_T\inv \left( \sqrt{T} \Lambda \right) (\bm{\beta}_0 - \bm{\beta}) 
			= 
			Q_T\inv \cdot o_P(1) \cdot \left( \underline{\bm{\beta}}_0 - \bm{\beta} + o_P(1) \right) \toprob 0
		\end{equation*}
		yielding (d$'$).
		\item[(2)] Assuming $\bm{\beta}_0 \toprob \bm{\beta}$ simplifies the terms in the proof of Theorem \ref{th:MAIN_THEOREM} since now $\bm{\beta} - \bm{\beta}_0 = o_P(1)$. The weaker rate imposed on $\Lambda$ does not influence results (a)-(c). Moreover,
		\begin{equation*}
			Q_T\inv \cdot ( \sqrt{T} \Lambda ) \cdot ( \bm{\beta}_0 - \bm{\beta} ) = Q_T\inv \cdot O_P(1) \cdot o_P(1) \toprob 0
		\end{equation*}
		so that (d$'$) follows.
	\end{itemize}
\end{proof}

\paragraph{Proof of Theorem~\ref{Joint_MAIN_THEOREM}}

\begin{proof}
	
	I make a straightforward adaptation of the proof found in \cite{hamilton1994time}, Proposition 11.2. Define $\hat{\Sigma}_u^* = T\inv (U U')$, which is expanded to
	\begin{align*}
		\hat{\Sigma}_u^* 
		& = \frac{1}{T} (Y - B Z) (Y - B Z)' \\
		& = \frac{1}{T} \left(Y - \hat{B}^R Z + (\hat{B}^R - B) Z \right) \left(Y - \hat{B}^R Z + (\hat{B}^R - B) Z \right)' \\
		& = \hat{\Sigma}_u^R + (\hat{B}^R - B) \left(\frac{Z Z'}{T}\right) (\hat{B}^R - B)' + \nonumber \\
		& \qquad + \frac{1}{T} \left( (Y - \hat{B}^R Z) Z' (\hat{B}^R - B)' + (\hat{B}^R - B) Z (Y - \hat{B}^R Z)' \right)
	\end{align*}
	Contrary to the least squares estimator, cross-terms do not cancel out since for $\Lambda \succeq 0$ the residuals $(Y - \hat{B}^R Z)$ are not in the orthogonal space of $Z$. From the consistency results of Theorem \ref{th:MAIN_THEOREM},
	\begin{equation*}
		(\hat{B}^R - B) \left(\frac{Z Z'}{T}\right) (\hat{B}^R - B)' = o_p(1) \left(\frac{Z Z'}{T}\right) o_p(1) \toprob 0
	\end{equation*}
	and
	\begin{equation*}
		\sqrt{T} (\hat{B}^R - B) \left(\frac{Z Z'}{T}\right) (\hat{B}^R - B)' = O_p(1) \left(\frac{Z Z'}{T}\right) o_p(1) \toprob 0
	\end{equation*}
	Further,
	\begin{equation*}
		\sqrt{T} \left[\frac{1}{T} (Y - \hat{B}^R Z) Z' (\hat{B}^R - B)'\right] = \left(\frac{\hat{U} Z'}{T}\right) \sqrt{T} (\hat{B}^R - B)' \toprob 0 
	\end{equation*}
	since again $\sqrt{T} \hat{B}^R$ is asymptotically normal, and $T\inv (\hat{U} Z') = T\inv (U Z') + (B - \hat{B}^R) \cdot T\inv (Z Z') = T\inv (U Z') + o_p(1) \toprob \E[U Z'] = 0$. The same holds for the remaining transpose term, too.
	
	It is thus proven that $\sqrt{T} (\hat{\Sigma}_u^* - \hat{\Sigma}_u^R) \toprob 0$, meaning $\sqrt{T} (\hat{\Sigma}_u^* - \Sigma_u) \toprob \sqrt{T} (\hat{\Sigma}_u^R - \Sigma_u)$ so that the two terms may be exchanged in computing the joint asymptotic distribution. Theorem \ref{th:MAIN_THEOREM} accordingly yields
	\begin{equation*}
		\begin{bmatrix}
			\hat{\bm{\beta}}^R_0 - \bm{\beta} \\
			\vect(\hat{\Sigma}_u^R) - \vect(\Sigma_u)
		\end{bmatrix}
		\toprob
		\begin{bmatrix}
			Q_T\inv \left( \sqrt{T} \Lambda \right) (\bm{\beta}_0 - \bm{\beta}) + Q_T\inv \frac{1}{\sqrt{T}} \bm{\xi} \\
			\frac{1}{\sqrt{T}} \bm{\eta}
		\end{bmatrix}
	\end{equation*} 
	where $\bm{\xi} = (Z \otimes I_K)\bof{u}$ and $\bm{\eta} = \vect(U U' - \Sigma_u)$. As in \cite{hamilton1994time}, Proof of Proposition 11.2, $(\bm{\xi}', \bm{\eta})'$ is a martingale difference sequence, thus the claim
	\begin{equation*}
		\sqrt{T} \begin{bmatrix}
			\hat{\bm{\beta}}^R_0 - \bm{\beta} \\
			\vect(\hat{\Sigma}_u^R) - \vect(\Sigma_u)
		\end{bmatrix}
		\todist 
		\mathcal{N}\left(
		\begin{bmatrix}
			\Gamma\inv \Lambda_0 (\underline{\bm{\beta}}_0 - \bm{\beta}) \\
			0
		\end{bmatrix},
		\begin{bmatrix}
			\Gamma\inv \otimes \Sigma_u & 0 \: \\
			0 & \Omega \:
		\end{bmatrix}
		\right)
	\end{equation*}
	as $T \to \infty$ follows. When the strengthened assumptions (1) or (2) of Theorem \ref{th:Corollary_MAIN_THEOREM} are used instead, the non-zero limiting mean vanishes
	\begin{equation*}
		Q_T\inv \cdot \Lambda \cdot \sqrt{T} (\bm{\beta}_0 - \bm{\beta}) \toprob 0
	\end{equation*}
	proving that the joint asymptotic distribution is mean-zero Gaussian. 
	
	Finally, to compute the explicit expression of the asymptotic variance $\Omega$, one must take care and note that $u_t$ is not assumed to be normally distributed, contrary to the remainder of the proof in \cite{hamilton1994time}, pp. 342-343. A correct expression for i.i.d. non-Gaussian $u_t$ can be found in Remark 2.1, \cite{bruggemannInferenceVARsConditional2016}, yielding
	\begin{equation*}
		\Omega = \text{Var}[\vect(u_t u_t')] = \E\big[\vect(u_t u_t')\, \vect(u_t u_t')'\big] - \bm{\sigma} \bm{\sigma}'
	\end{equation*}
	where $\bm{\sigma} = \vect(\Sigma_u)$.
\end{proof}

\paragraph{Proof of Theorem~\ref{th:MAIN_TEOREM_with_shrink}}

\begin{proof}
	The stated results reduce to studying the behavior of two components used in the proof of Theorem \ref{th:MAIN_THEOREM} and Theorem \ref{th:Corollary_MAIN_THEOREM}, under the additional simplification of $\bm{\beta}_0 = 0$. 
	\begin{itemize}
		\item[(a)] Identical to result (a) in Theorem \ref{th:MAIN_THEOREM}.
		\item[(b)] Consistency follows immediately by the fact that ${\Lambda}\, \bm{\beta} \toprob 0$.
		%
		%
		\item[(c)] Follows from (c), Theorem \ref{th:MAIN_THEOREM} and (b) above.
		\item[(d$\,''$)] The bias term in the expression of $\sqrt{T}(\hat{\bm{\beta}}^R - \bm{\beta})$ is driven by
		\begin{equation*}
			\sqrt{T} {\Lambda}\, \bm{\beta} 
			=
			\begin{bmatrix}
				\sqrt{T} \Lambda_1 \cdot \bm{\beta}_1 \\
				\sqrt{T} \Lambda_2 \cdot T^{-(1/2 + \delta)} \bm{b}_2
			\end{bmatrix} =
			\begin{bmatrix}
				o_P(1) \cdot \bm{\beta}_1 \\
				\Lambda_2 \cdot T^{-\delta} \bm{b}_2
			\end{bmatrix} \toprob 0
		\end{equation*}
		meaning there is no asymptotic bias. On the other hand, 
		\begin{equation*}
			\left( \left({T}\inv {Z Z'} \right) \otimes I_K + \Lambda \right)\inv \toprob (\Gamma + \overline{\Lambda})\inv \otimes I_K .
		\end{equation*}
		Setting $(\Gamma + \overline{\Lambda}) = \Gamma_{\overline{\Lambda}}$ yields the claim since there are no further simplifications in the asymptotic variance formula, cf. proof of (d), Theorem \ref{th:MAIN_THEOREM}.
	\end{itemize}
\end{proof}

\section{Cross-validation}\label{appendix:cross_validation}

Later in this section, the following lemma will be useful.

\begin{lemma}\label{lemma:exp_decay_MA}
	Let $y_t$ a stationary and stable mean-zero AR($p$) process with companion form matrix $\mathbb{A} \in \Real^{p \times p}$. Then, the associated MA($\infty$) coefficients, $\{\vartheta_k\}_{k \in \mathbb{N}_0}$, decay exponentially for $k$ sufficiently large, that is,
	\begin{equation*}
		\abs{\vartheta_k} = O\big( \exp(- C_\mathbb{A}\, k) \big)
	\end{equation*}
	for some constant $C_\mathbb{A} > 0$.
\end{lemma}

\begin{proof}
	Recall from e.g. \cite{lutkepohl2005new} that if $\mathbb{A}$ is the companion matrix of the AR($p$) model, then $\vartheta_k = \iota \mathbb{A}^j  \iota'$, where $\iota := (1,0, \ldots, 0)' \in \Real^p$. Note that $\norm{\iota} = 1$ by construction and that the spectral radius of $\mathbb{A}$ is less than one under the assumption of stability. Let $\mathbb{A} = {V} {D} {V}\inv$ be the Jordan canonical form of the companion matrix and $\omega_1$ the dominant eigenvalue. Under the assumption of stability $\abs{\omega_1} < 1$. Now observe that, supposing $D$ has $l \leq p$ diagonal blocks, for $k \geq 0$
    \begin{equation*}
        \frac{\abs{ \iota \mathbb{A}^k \iota' }}{\abs{\omega_1^k}}
        = 
        \abs{ \iota {V} \left( \frac{{D}^k}{\omega_1^k} \right) {V}\inv \iota' } 
        = 
        \abs{ \iota {V} \begin{bmatrix}
                [1] & & & \\
                & \left[ \frac{1}{\omega_1^k} {D}_2^k \right] & & \\
                & & \ddots & \\
                & & & \left[ \frac{1}{\omega_1^k} {D}_l^k \right]
        \end{bmatrix} {V}\inv \iota' } ,
    \end{equation*}
    where $[1]$ is the dominant Jordan block, while ${D}_2, \ldots, {D}_l$ are the non-dominant blocks.
    Then, letting $k \to \infty$, one gets
	\begin{equation*}
		\frac{\abs{ \iota \mathbb{A}^k \iota' }}{\abs{\omega_1^k}}
		\to 
		\abs{ \iota {V} \begin{bmatrix}
                [1] & & & \\
                & 0 & & \\
                & & \ddots & \\
                & & & 0
        \end{bmatrix} {V}\inv \iota' } 
        =
        C_D 
	\end{equation*}
	for some constant $C_D < \infty$, as for $\ell \in \{2, \ldots, l\}$ one can see that ${D}_\ell^k / {\omega_1^k} \to 0$. Since for complex $\omega_1$ it holds $\abs{\omega_1^k} = \abs{\omega_1}^k$, letting $C_\mathbb{A} = -\log(\abs{\omega_1})$ the proof is complete.
\end{proof}

\subsection{Two-fold CV}

\paragraph{Proof of Theorem~\ref{th:cv2_m_asconv_err}}

\begin{proof}
	Write $\textnormal{cv2}_m(\Lambda)$ as
	\begin{align*}
		\textnormal{cv2}_m(\Lambda)
		& = \widetilde{T}\inv \left( \widetilde{\bof{y}} - \widetilde{Z}' \hat{\bm{\beta}}^R(\Lambda) \right)' \left( \widetilde{\bof{y}} - \widetilde{Z}' \hat{\bm{\beta}}^R(\Lambda) \right) \\
		& = \widetilde{T}\inv \left[ \widetilde{\bof{u}} + \widetilde{Z}' \left( \bm{\beta} - \hat{\bm{\beta}}^R(\Lambda) \right) \right]' \left[ \widetilde{\bof{u}} + \widetilde{Z}' \left( \bm{\beta} - \hat{\bm{\beta}}^R(\Lambda) \right) \right] \\
		& = \left( \bm{\beta} - \hat{\bm{\beta}}^R(\Lambda) \right)' \left( \frac{\widetilde{Z} \widetilde{Z}'}{\widetilde{T}} \right) \left( \bm{\beta} - \hat{\bm{\beta}}^R(\Lambda) \right)
			+ 2 \left( \bm{\beta} - \hat{\bm{\beta}}^R(\Lambda) \right)' \left( \frac{\widetilde{Z} \widetilde{\bof{u}}}{\widetilde{T}} \right)
			+ \frac{\widetilde{\bof{u}}' \widetilde{\bof{u}}}{\widetilde{T}} .
	\end{align*}
	By a strong LLN for weakly dependent processes (see, for example, \citealt{Rio2017}), it holds that $\widetilde{T}\inv \widetilde{Z} \widetilde{Z}' \overset{a.s.}{\longrightarrow} \Gamma_z $, $\widetilde{T}\inv \widetilde{Z} \widetilde{\bof{u}} \overset{a.s.}{\longrightarrow} \bof{0}$ and $\widetilde{T}\inv \widetilde{\bof{u}}' \widetilde{\bof{u}} \overset{a.s.}{\longrightarrow} \Sigma_u$.
	Since, by a similar decomposition, it holds that 
	\begin{equation*}
		\textnormal{err}(\Lambda) 
		= 
		\left( \bm{\beta} - \hat{\bm{\beta}}^R(\Lambda) \right)' \Gamma \left( \bm{\beta} - \hat{\bm{\beta}}^R(\Lambda) \right)
		+ \Sigma_u ,
	\end{equation*}
	where $\Gamma = \E[z_t z_t']$ is positive definite, almost sure convergence is proven.
	
	To prove uniform convergence over compact subsets, I follow the proof of \cite{Patil2021}, Theorem 4.1, which relies on verifying the conditions of the Arzel\`a-Ascoli theorem. That is, one must prove that function $\textnormal{cv2}_m(\Lambda)$ as well as its first derivatives are bounded over compact sets.
	As the Arzel\`a-Ascoli theorem readily generalizes to Euclidean spaces of arbitrary (fixed) dimension, I will directly consider the matrix derivative when checking boundedness. 
	
	Assume that $\Lambda \in \mathcal{I}$, where $\mathcal{I}$ is a compact set of positive semidefinite penalization matrices $\Lambda$ such that $\norm{\Lambda}_{\max} < \infty$ and $\Lambda \succ \lambda_{\min} I$.
	Note $\bm{\beta} - \hat{\bm{\beta}}^R(\Lambda) = - (\hat{\Gamma} + \Lambda)\inv \Lambda \bm{\beta} + (\hat{\Gamma} + \Lambda)\inv (T\inv Z \bof{u})$ where $\hat{\Gamma} = T\inv Z Z'$. Using this decomposition, one gets first
	\begin{align*}
		\abs{ \textnormal{err}(\Lambda) }
		& \leq \Sigma_u + \norm{ \Gamma }_2 \norm{ \bm{\beta} - \hat{\bm{\beta}}^R(\Lambda) }_2^2  \\
		%
		%
		& \leq \Sigma_u  + \norm{ \Gamma }_2 \left[ \norm{ (\hat{\Gamma} + \Lambda)\inv \Lambda }_2^2 \norm{ \bm{\beta} }_2^2 + \norm{ (\hat{\Gamma} + \Lambda)\inv }_2^2 \norm{ T\inv Z \bof{u} }_2^2  \right. \\
		& \qquad\qquad\qquad\quad \left. + 2 \norm{ (\hat{\Gamma} + \Lambda)\inv }_2^2 \norm{ T\inv Z \bof{u} }_2 \norm{ \Lambda \bm{\beta} }_2 \right] \\
		& \overset{a.s.}{\leq}\: \Sigma_u + \norm{ \Gamma }_2 \frac{ \omega_{\max}(\Lambda)^2 \norm{\bm{\beta}}_2^2 + C_{zu}^2 + 2 C_{zu}\, \omega_{\max}(\Lambda) \norm{\bm{\beta}}_2 }{( \omega_{\min}(\hat{\Gamma}) + \omega_{\min}(\Lambda) )^2}
	\end{align*}
	where the last line follows from applying Weil's eigenvalue inequalities \citep{Bhatia1997} to $(\hat{\Gamma} + \Lambda)\inv$ and the fact that $T\inv Z \bof{u} \overset{a.s.}{\longrightarrow} \bof{0}$ by a strong LLN, so that there exists a constant $C_{zu} > 0$ bounding $\norm{ T\inv Z \bof{u} }_2$ for $T$ large enough.
	
	Additionally, the matrix derivative of $\bm{\beta} - \hat{\bm{\beta}}^R(\Lambda)$ with respect to $\Lambda$ is
	\begin{equation*}
		\frac{\partial (\bm{\beta} - \hat{\bm{\beta}}^R(\Lambda))}{\partial \Lambda}
		= 
		- (\hat{\Gamma} + \Lambda)\inv \bm{\beta} + (\hat{\Gamma} + \Lambda)^{-2} \Lambda \bm{\beta} + (\hat{\Gamma} + \Lambda)^{-2}(T\inv Z \bof{u}) ,
	\end{equation*}
	so that, by using similar argument as the one used above, one gets
	\begin{align*}
		\left\lvert \frac{\partial\, \textnormal{err}(\Lambda) }{\partial \Lambda} \right\rvert
		& \:\overset{a.s.}{\leq}\:
		\Sigma_u + \norm{ \Gamma }_2 \left[ 
			\frac{ \norm{\bm{\beta}}_2^2 }{( \omega_{\min}(\hat{\Gamma}) + \omega_{\min}(\Lambda) )^2} 
			+ \frac{ 2\, \omega_{\max}(\Lambda) \norm{\bm{\beta}}_2^2 + 2 C_{zu}\, \norm{\bm{\beta}}_2 }{( \omega_{\min}(\hat{\Gamma}) + \omega_{\min}(\Lambda) )^3} \right. \\
		& \qquad\qquad\qquad\qquad \left.
			+ \frac{ \omega_{\max}(\Lambda)^2 \norm{\bm{\beta}}_2^2 + C_{zu}^2 + 2 C_{zu}\, \omega_{\max}(\Lambda) \norm{\bm{\beta}}_2 }{( \omega_{\min}(\hat{\Gamma}) + \omega_{\min}(\Lambda) )^4}
		\right] .
	\end{align*}
	The almost sure bound in the last display is also clearly finite for any $\Lambda \in \mathcal{I}_\lambda$, as required. 
	
	One can easily bound $\textnormal{cv2}_m(\Lambda)$ and its first derivative as $\textnormal{err}(\Lambda)$, with only addition of an extra term depending on $ ( \bm{\beta} - \hat{\bm{\beta}}^R(\Lambda) )' ( {\widetilde{T}}\inv {\widetilde{Z} \widetilde{\bof{u}}} )$. This means that $\textnormal{err}(\Lambda) - \textnormal{cv2}_m(\Lambda)$ forms an equicontinuous family of functions with respect to $\Lambda$ over any $\mathcal{I}_\lambda$. Therefore, Arzel\`a-Ascoli yields uniform convergence of a subsequence, and since the difference converges to zero pointwise, too, the entire sequence converges uniformly.
\end{proof}

\subsection{Cross-validation under Dependence}\label{section:cv_under_dependence}

The result of Theorem \ref{th:cv2_m_asconv_err} may be only partially informative in practice, as it does not give information on how dependence, in terms of the buffer block of size $m$, impacts $\textnormal{cv2}_m(\Lambda)$. Indeed, due to averaging, the effects of time dependence between the estimation and evaluation folds are washed out in the limit $\widetilde{T} \to \infty$ even when $m$ is fixed. Therefore, Theorem \ref{th:cv2_m_asconv_err} is not useful in finite samples, where one would preferably set $m$ to be as small as possible.

To address dependence, in the same setup as above, consider an alternative predictive error measure, the \textit{$m$-dependence prediction error}, 
\begin{equation*}\label{eq:m_depend_pred_err}
	\textnormal{Err}_m\left( \hat{\bm{\beta}}^R(\Lambda) \right) 
	:= 
	\E_{\,{y}_{T+m+1}, {z}_{T+m+1}}\left[ \left({y}_{T+m+1} - {z}_{T+m+1}'\, \hat{\bm{\beta}}^R(\Lambda) \right)^2 \big\vert Z, \bof{y} \right] ,
\end{equation*}
and the associated error curve, $\textnormal{err}_m(\Lambda) := \textnormal{Err}_m\left( \hat{\bm{\beta}}^R(\Lambda) \right)$. The empirical counterpart to this quantity is given by $\textnormal{cv2}_m(\Lambda)$ for $\widetilde{T} = 1$.
The next theorem shows that in the case of a purely autoregressive data generating process, the error one commits by choosing a finite buffer size is exponentially small.

\begin{theorem}\label{th:cv_dependence}
	Under Assumptions A-C, for every $\Lambda$ in the cone of diagonal positive definite penalty matrices with diagonal entries in $(\lambda_{\min}, \infty)$, it holds that
	\begin{equation*}
		\textnormal{err}_m(\Lambda) - \textnormal{err}(\Lambda) 
		=
		O(\exp(- C_{\bm{\beta}}\, m))
	\end{equation*}
	where $C_{\bm{\beta}}$ is a constant that does not depend on $\Lambda$.
\end{theorem}

\begin{proof}
	In line with the definition of $\textnormal{cv2}_m(\Lambda)$, I set $\widetilde{y}_1 = {y}_{T+m+1}$, $\widetilde{z} = {z}_{T+m+1}$ and $\widetilde{u}_1 = {y}_{T+m+1} - {z}_{T+m+1}' \bm{\beta}$.
	With the same approach as in the proof of Theorem \ref{th:cv2_m_asconv_err}, here one finds
	\begin{align*}
		\textnormal{err}_m(\Lambda) 
		& = \left( \bm{\beta} - \hat{\bm{\beta}}^R(\Lambda) \right)' \E\left[ \widetilde{z}_{1} \widetilde{z}_{1}' \big\vert Z, \bof{y} \right] \left( \bm{\beta} - \hat{\bm{\beta}}^R(\Lambda) \right) \\
		& \qquad + 2 \left( \bm{\beta} - \hat{\bm{\beta}}^R(\Lambda) \right)' \E\left[ \widetilde{z}_{1} \widetilde{u}_{1} \big\vert Z, \bof{y} \right] + \E\left[ \widetilde{u}_{1}^2 \big\vert Z, \bof{y} \right] ,
	\end{align*}
	where I have removed the subscript from expectation $\E$ to make notation clearer. Since $\widetilde{u}_{1}$ is independent of $\widetilde{z}_{1}$, the cross term reduces to zero, while  
	%
		$\E\left[ \widetilde{u}_{1}^2 \big\vert Z, \bof{y} \right] = \Sigma_u$ .
	%
	Thus, it is the first term in the last display that is effected by dependence. 
	
%
	To see this, let $\widetilde{z}_{i1}$ for $1 \leq i \leq p$ be the $i$th entry of $\widetilde{z}_{1}$. Then, using the MA($\infty$) decomposition of $y_t$, i.e. $y_t = \sum_{\ell=0}^\infty \phi_\ell u_{t-\ell}$, one can write
	\begin{align*}
		\widetilde{z}_{i1} = z_{T+m+1-i} 
		& = \sum_{\ell=0}^{m-i} \phi_\ell u_{T+m+1-i-\ell} + \sum_{\ell=M+1-i}^{\infty} \phi_\ell u_{T+m+1-i-\ell} \\
		& = \sum_{\ell=0}^{m-i} \phi_\ell u_{T+m+1-i-\ell} + \sum_{s=0}^{\infty} \phi_{m + 1 -i + s} u_{T-s} \\
		& = \eta_i + \zeta_i .
	\end{align*}
	Note that $\eta_i$ is independent of $\zeta_i$, and $\zeta_i$ is belongs with the $\sigma$-algebra generated by $Z$ and $\bof{y}$. Therefore,
	\begin{align*}
		\E\left[ \widetilde{z}_{1} \widetilde{z}_{1}' \big\vert Z, \bof{y} \right]
		& = \E\left[ (\eta_i + \zeta_i) (\eta_i + \zeta_i)' \big\vert Z, \bof{y} \right] \\
		& = \Gamma_\eta + \zeta_i \zeta_i' ,
	\end{align*}
	as $\E\left[ \eta_i \zeta_i' \big\vert Z, \bof{y} \right] = \E\left[ \eta_i \big\vert Z, \bof{y} \right] \zeta_i' = 0$. 
	
	Now, I prove that $\Gamma_\eta \to \Gamma_z$ and $\zeta_i \zeta_i' \to 0$ at an exponential rate. First, let
	\begin{equation*}
		H_\phi := \begin{bmatrix}
			\sum_{s=0}^{\infty} \phi_{m + 1 + s}^2 & \sum_{s=0}^{\infty} \phi_{m + 1 + s} \phi_{m + 1 + s - 1}  & & \\
			\sum_{s=0}^{\infty} \phi_{m + 1 + s -1 } \phi_{m + 1 + s} & \sum_{s=1}^{\infty} \phi_{m + 1 + s}^2 & & \\
			& & \ddots & \\
			& & & \sum_{s=p}^{\infty} \phi_{m + 1 + s}^2 
		\end{bmatrix}
	\end{equation*}
	and observe that
	\begin{equation*}
		\norm{ \Gamma_\eta - \Gamma_z }_2 
		\leq \Sigma_u\, p \norm{ H_\phi }_{\max} 
		\leq C \exp(- C_\eta \, m) ,
	\end{equation*}
	since $\Gamma_\eta = \Sigma_u \otimes H_\phi$, lag order $p$ is fixed and the maximal entry of $H_\phi$ decays exponentially for $m$ sufficiently large following Lemma \ref{lemma:exp_decay_MA}. Secondly, much in the same vein 
	\begin{equation*}
		\norm{ \zeta_i \zeta_i' }_2 
		= \zeta_i' \zeta_i 
		\leq p \left( \sum_{s=0}^{\infty} \phi_{m + 1 -i + s} u_{T-s} \right)^2
		\leq C' \exp(- C_\zeta \, m) .
	\end{equation*}
	The proof concludes by setting $C_{\bm{\beta}} = \max(C_\eta, C_\zeta)$.
\end{proof}

\begin{remark}
	Theorem~\ref{th:cv_dependence} is reassuring because it suggests that, in practice, if the AR($p$) model is correctly specified, one may keep $m$ small and still get a valid prediction error estimate in sense of Theorem~\ref{th:cv2_m_asconv_err}. In simulations, I set $m = 0$, which is a common simplification to more effectively exploit the entire sample and does not, as discussed above, effect consistency \citep{bergmeirNoteValidityCrossvalidation2018}.
	Moreover, note that Theorem~\ref{th:cv_dependence} intuitively gives a \textit{worst-case} rate:  the dependence of between $\widetilde{z}_t$ and data in the estimation set gets milder, on average, as $\widetilde{T}$ grows. Thus, if CV aspect ratio $\widetilde{T} / T$ is balanced, dependence only plays a negligible role.
\end{remark}

\subsection{Asymptotically Valid CV}

\paragraph{Proof of Theorem \ref{th:asym_valid_cv_penalty}}

\begin{proof}
	First, recall that
	\begin{align*}
		\hat{\bm{\beta}}^R_\blacklozenge(\Lambda) 
			& = \left( \frac{Z Z'}{T} + \sqrt{T} \Lambda \right)\inv \frac{Z\bof{y}}{T} \\
			& = \bm{\beta} - \left( \frac{Z Z'}{T} + \sqrt{T} \Lambda \right)\inv \left( \sqrt{T} \Lambda \right) \bm{\beta} + \left( \frac{Z Z'}{T} + \sqrt{T} \Lambda \right)\inv \frac{Z \bof{u}}{T} .
	\end{align*}
	It also holds $\textnormal{Err}\left( \hat{\bm{\beta}}^R_\blacklozenge(\Lambda) \right) 
	= 
	\left( \bm{\beta} - \hat{\bm{\beta}}^R_\blacklozenge(\Lambda) \right)' \Gamma \left( \bm{\beta} - \hat{\bm{\beta}}^R_\blacklozenge(\Lambda) \right)
	+ \Sigma_u$. Now, notice that
	\begin{equation*}
		\left( {T}\inv Z Z' + \sqrt{T} \Lambda \right)\inv 
		= O_P(1)
		\quad \textnormal{and} \quad
		\left( {T}\inv Z Z' + \sqrt{T} \Lambda \right)\inv \left( \sqrt{T} \Lambda \right) 
		 = O_P(1) ,
	\end{equation*}
	since $\Lambda \in \mathcal{I}_\lambda$. It follows
	\begin{equation*}
		\textnormal{Err}\left( \hat{\bm{\beta}}^R_\blacklozenge(\Lambda) \right)
		= 
		\bm{\beta}' \left( \sqrt{T} \Lambda \right) \left( \frac{Z Z'}{T} + \sqrt{T} \Lambda \right)\inv \Gamma \left( \frac{Z Z'}{T} + \sqrt{T} \Lambda \right)\inv \left( \sqrt{T} \Lambda \right) \bm{\beta} + \Sigma_u + O_P(\sqrt{T}) .
	\end{equation*}
	One can now consider a sequence $\widetilde{\Lambda} = O_p(T^{-1/2})$ of regularizers in $\mathcal{I}_\lambda$. 
	By taking the limit, one gets that
	\begin{align*}
		\lim_{T \to \infty} \textnormal{Err}\left( \hat{\bm{\beta}}^R_\blacklozenge \left( \widetilde{\Lambda} \right) \right)
		& =
		\bm{\beta}' \cdot O_p(1) \cdot \left( \Gamma + O_p(1) \right)\inv \Gamma \left( \Gamma + O_p(1) \right)\inv \cdot O_p(1) \cdot \bm{\beta} + \Sigma_u \\
		& \geq 
		\Sigma_u
		= 
		\lim_{T \to \infty} \textnormal{Err}\left( \hat{\bm{\beta}}^R_\blacklozenge(0) \right) ,
	\end{align*}
	meaning that $\widetilde{\Lambda}$ can not be optimal asymptotically, since the least squares solution at $\Lambda = 0$ achieves a lower predictive error. Additionally, any sequence with lower convergence order is also asymptotically invalid. Therefore, by contradiction, it must hold that $\Lambda_\blacklozenge = o_p(T^{-1/2})$.
\end{proof}

\paragraph{Proof of Corollary~\ref{corollary:asym_valid_cv_shrink}}

\begin{proof}
	The first results follows directly from Theorem~\ref{th:asym_valid_cv_penalty}. Further, the fact that $\Lambda_{2,\blacklozenge} = O_P(1)$ is trivial because it is assumed that $\Lambda \in \mathcal{I}_\lambda$.
	
	To prove the second part of the theorem, one can simply notice that, given the assumption on the coefficients $\bm{\beta}$ in Theorem~\ref{th:MAIN_TEOREM_with_shrink}, 
	\begin{equation*}
		\left( \frac{Z Z'}{T} + \sqrt{T} \Lambda \right)\inv \left( \sqrt{T} \Lambda \right) \bm{\beta}
		=
		\left( \frac{Z Z'}{T} + \sqrt{T} \Lambda \right)\inv 
		\begin{bmatrix}
			\sqrt{T} \Lambda_1 \cdot \bm{\beta}_1 \\
			\Lambda_2 \cdot T^{-\delta} \bm{b}_2
		\end{bmatrix}
		\toprob
		0 ,
	\end{equation*}
	where $\Lambda$ was partitioned into two diagonal blocks, $\Lambda_1$ and $\Lambda_2$, as done previously. Block $\Lambda_1$ must be $O_P(T^{-1/2})$ following the proof of Theorem~\ref{th:asym_valid_cv_penalty}. Finally, the fact that $T^{-\delta} \bm{b}_2 \to 0$ as $T \to \infty$ means that, in the limit, a nonzero $\Lambda_2$ does not yield a sub-optimal cross-validation loss value.  
\end{proof}

\section{Monte Carlo Simulations}\label{section:simulation_details}

\subsection{Cross-validation Details}\label{subsection:cv_penalty_selection}

To select the ridge penalty, I implement the lag-adapted structure and choose the relevant $\lambda_i$'s using block non-dependent cross-validation, c.f. \cite{Burman1994,bergmeirNoteValidityCrossvalidation2018}. I constraint the optimization domain of $\lambda_i$ to be $[0, 10^2]$, without discretization. An issue with cross-validation regards the GLS ridge estimator: the matrices involved can quickly become prohibitively large due to Kronecker products, making CV optimization impractical. To avoid this, I set the penalty for lag-adapted $\hat{\bm{\beta}}^{RGLS}(\Lambda)$ to be the same as that obtained for $\hat{\bm{\beta}}^{R}$ via CV, which means the regularizer is tuned sub-optimally. Nonetheless, an identical choice of $\Lambda$ for both methods can help shed light on the difference in structure between the two estimators.

In contrast to \cite{banburaLargeBayesianVector2010}, I do not tune the shrinkage parameter of the Minnesota BVAR using a mean squared forecasting error (MSFE) criterion: instead, I again use block CV. Since a Minnesota prior can be easily implemented with the use of augmented regression matrices, cross-validation can be much more efficiently implemented than for $\hat{\bm{\beta}}^{RGLS}$. The resulting choice of prior tightness $\lambda^2$ is reasonable because CV, too, estimates the (one step ahead) forecasting risk. Since in this context only the mean of the posterior is used to compute pointwise impulse responses, one can even directly interpret the cross-validated Minnesota BVAR estimator as a refinement of GLS ridge.

\subsection{Penalty Selection in Simulations}

In the simulations of Section \ref{section:monte_carlo_simulations}, an interesting aspect to study is how data-driven penalty selection methods behave. Both average and individual behavior are important, because the former generally gives intuition for the kind of regularization structure that is selected for the model, while the latter is relevant for empirical modeling where estimation can be done only on one sample.

I use Setup A from the Monte Carlo experiments (with 1000 replications) and instead just focus on the behavior of a number of penalty validation techniques. The rational behind the choice is straightforward: Setup A involves models with more lags. Figure \ref{fig:cv1} shows the mean selected penalty parameter $\lambda_i$ for $i \in \{1, \ldots, 10\}$. The methods I compare are: out-of-sample validation (OOS) with a split of 80\% of sample for estimation and 20\% for testing; block cross-validation with 5 (CV 5) or 10 (CV 10) folds; block non-dependent cross validation with 10 folds (BND CV 10). The differences between 5 and 10-fold block CV methods are small, and both largely agree with BND cross-validation apart at lag 2. In contrast, out-of-sample validation appears to select on average much higher penalties at early lags and slightly lower ones at higher lags. 

Regarding the distribution of selected $\lambda_i$ over all replications, one can notice from Figure \ref{fig:cv2} that there is indeed important variation in the individual penalty choices. The evidence is for the specific case of 10-fold block CV, but it appears as a common pattern with other techniques, too. The implications of such variability in lag-adapted penalties are hard to gauge because in any given sample it is not possible to say whether the choice of $\{\lambda_i\}$ is good or bad outside of speculation. A guiding principle might be to compare CV with any "hyperpriors" one might have on $\Lambda$ itself -- like in Bayesian paradigm -- but then parameteric penalty matrices like the one used with the Minnesota prior should be preferred. Indeed, the question of whether a more robust but still general method other than cross--validation can be applied to the time series context is highly relevant.

\begin{figure}
	\centering
	\makebox[\textwidth][c]{
		\begin{subfigure}{0.48\textwidth}
			\includegraphics[width=\textwidth]{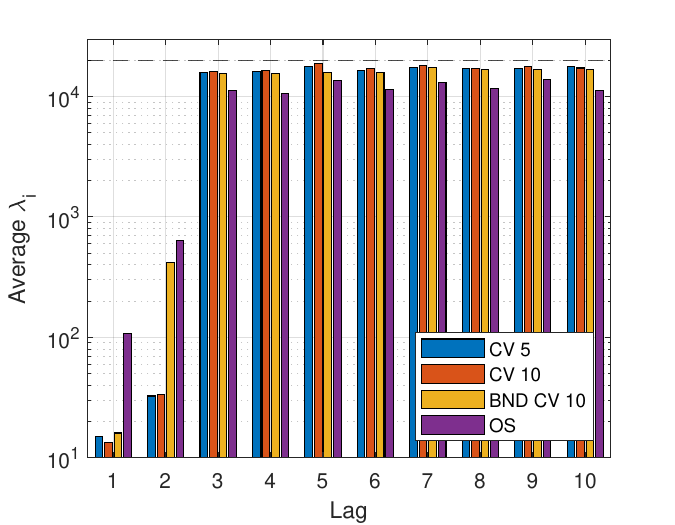}
			\caption{Average penalty selected by lag.}
			\label{fig:cv1}
		\end{subfigure}
		\begin{subfigure}{0.48\textwidth}
			\includegraphics[width=\textwidth]{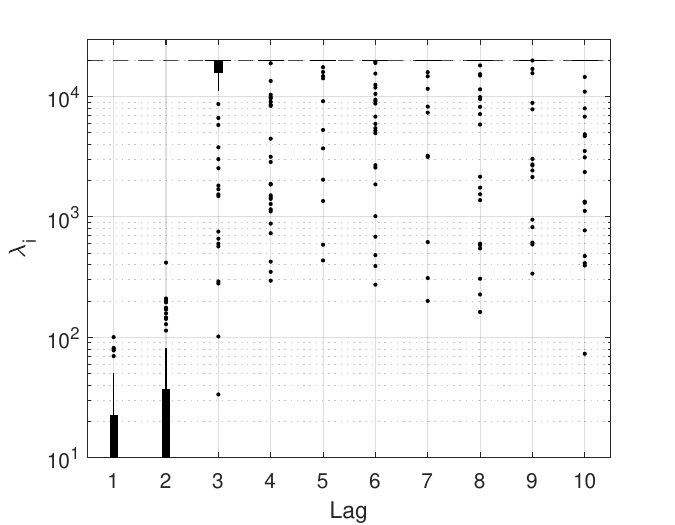}
			\caption{Distribution of 10-fold CV penalties.}
			\label{fig:cv2}
		\end{subfigure}
	}
	\caption{Comparison of penalty selection methods in Setup A, 1000 replications.}
\end{figure}

\subsection{Penalty Selection with Many Lags}

The minimization problem involved in OOS and CV penalty selection in its most general form suffers from the curse of dimensionality. This is somehow mitigated when using a lag-adapted regularizer since the loss with $\Lambda^\ell$ depends only on $p$ non-negative parameters, rather then $K^2 p$ with non-block-diagonal $\Lambda$. But whenever $p$ is chosen large (e.g. $p > 20$) the problem resurfaces.

I suggest a basic shortcut to make computation easier. Such simplification stems from the following observation. If one is willing to believe the assumptions of Section \ref{section:inference_shrinkage}, then, because deep lag coefficients are small, after the first few lags penalization can be equally strong on all remaining lags with negligible additional bias. The shortcut, then, is to estimate only $\{\lambda_1, \ldots, \lambda_r\}$ for $r < p$, then extrapolate and use $\{\lambda_1, \ldots, \lambda_{r-1}, \lambda_r, \ldots , \lambda_r\}$, where $\lambda_r$ is repeated $p-r$ times, as lag-adapted penalty parameters. The idea is supported by the results in Figure \ref{fig:cv1}. However, this strategy is not generally appropriate, because it could be that even at relatively deep lags some coefficients are large, while on the other hand the early coefficients are small. Therefore, in applications where the ridge penalty needs to be estimated only once or a handful of times I would suggest to avoid this shortcut altogether.

\subsection{Numerical Optimization}

For a VAR($p$) and a lag-adapted $\Lambda^\ell$, a collection $\{\lambda_1, \ldots, \lambda_p\}$ must be chosen. To implement OS and CV for the ridge estimators, I rely on MATLAB optimization routines, in both cases using the optimization function \texttt{patternsearch} from the MATLAB Optimization Toolbox. The domain of optimization is chosen to be the hypercube $[0, 10^2]^p$, where $T$ is the sample size. The choice of a bounded domain is asymptotically valid, c.f. Theorem~\ref{th:asym_valid_cv_penalty} and Corollary~\ref{corollary:asym_valid_cv_shrink}.

In applications, since the CV loss needs not be convex \citep{Stephenson2021}, it appropriate to employ advanced optimization routines, e.g. genetic or pattern-based optimizers like \texttt{patternsearch}, if possible. When one only requires to estimate the VAR model once, then the selection of $\Lambda$ is a one-time cost. The gains of better optimization solutions therefore are often superior to the higher computational costs one incurs in when using more sophisticated routines. 

\newpage

\subsection{Additional Tables}\label{appendix:additional_tables}
\vspace{2em}

\begin{table}[h!]
	\centering
	\renewcommand{\arraystretch}{0.9}
	\caption{MSE Relative to OLS -- Setup B}
	\begin{tabular}{lcccccccc}
		\toprule
		Variable & Method & $h$ = 1 & $h$ = 4 & $h$ = 8 & $h$ = 12 & $h$ = 16 & $h$ = 20 & $h$ = 24 \\
		\midrule
		& RIDGE & 1.11 & 1.08 & 1.16 & 1.06 & 0.90 & 0.89 & 0.94 \\
		& RIDGE-GLS & 1.16 & 1.00 & 0.99 & 1.00 & 0.93 & 0.93 & 0.95 \\
		Real GDP & LP & 1.00 & 1.14 & 1.37 & 1.52 & 1.72 & 1.98 & 2.24 \\
		& BVAR-CV & 0.90 & 0.87 & 1.04 & 1.01 & 0.92 & 0.92 & 0.98 \\
		& H-BVAR & 0.83 & 0.62 & 0.78 & 0.73 & 0.62 & 0.62 & 0.68 \\
		\midrule
		& RIDGE & 2.25 & 2.10 & 1.81 & 1.54 & 1.39 & 1.47 & 3.37 \\
		& RIDGE-GLS & 1.17 & 1.09 & 1.10 & 1.10 & 1.08 & 1.05 & 1.04 \\
		GDP Deflator & LP & 1.00 & 1.13 & 1.25 & 1.33 & 1.40 & 1.47 & 1.54 \\
		& BVAR-CV & 1.06 & 1.03 & 1.00 & 0.95 & 0.92 & 0.92 & 0.93 \\
		& H-BVAR & 0.81 & 0.91 & 1.10 & 1.11 & 1.04 & 0.98 & 0.94 \\
		\midrule
		& RIDGE & 0.99 & 1.29 & 1.31 & 1.08 & 0.96 & 1.05 & 2.31 \\
		& RIDGE-GLS & 0.94 & 0.96 & 1.04 & 0.99 & 0.95 & 0.96 & 0.99 \\
		Consumption & LP & 1.00 & 1.13 & 1.32 & 1.44 & 1.63 & 1.83 & 2.03 \\
		& BVAR-CV & 1.04 & 1.07 & 1.15 & 1.00 & 0.92 & 0.94 & 0.99 \\
		& H-BVAR & 0.94 & 0.83 & 0.98 & 0.83 & 0.78 & 0.83 & 0.91 \\
		\midrule
		& RIDGE & 1.49 & 1.27 & 1.17 & 0.99 & 0.70 & 0.73 & 1.61 \\
		& RIDGE-GLS & 1.34 & 1.14 & 1.02 & 1.02 & 0.86 & 0.82 & 0.86 \\
		Investment & LP & 1.00 & 1.15 & 1.40 & 1.63 & 2.03 & 2.76 & 3.59 \\
		& BVAR-CV & 1.51 & 1.01 & 0.97 & 0.97 & 0.93 & 1.08 & 1.24 \\
		& H-BVAR & 1.06 & 0.68 & 0.69 & 0.66 & 0.63 & 0.87 & 1.14 \\
		\midrule
		& RIDGE & 1.22 & 1.24 & 1.18 & 1.03 & 0.77 & 0.76 & 1.27 \\
		& RIDGE-GLS & 1.07 & 1.05 & 1.01 & 1.03 & 0.90 & 0.85 & 0.90 \\
		Hours & LP & 1.00 & 1.14 & 1.33 & 1.53 & 1.81 & 2.35 & 2.92 \\
		& BVAR-CV & 0.89 & 0.88 & 1.03 & 1.02 & 0.91 & 0.95 & 1.05 \\
		& H-BVAR & 0.77 & 0.71 & 0.97 & 0.96 & 0.81 & 0.85 & 0.98 \\
		\midrule
		& RIDGE & 0.85 & 0.99 & 0.85 & 0.94 & 1.13 & 1.40 & 4.70 \\
		& RIDGE-GLS & 0.93 & 0.97 & 0.89 & 0.94 & 1.04 & 1.04 & 1.00 \\
		Compensation & LP & 1.00 & 1.18 & 1.52 & 1.78 & 1.90 & 1.93 & 1.99 \\
		& BVAR-CV & 1.07 & 0.92 & 0.94 & 0.93 & 0.99 & 1.01 & 0.98 \\
		& H-BVAR & 0.86 & 0.80 & 1.12 & 1.31 & 1.35 & 1.27 & 1.22 \\
		\midrule
		& RIDGE & 2.17 & 1.21 & 0.96 & 0.93 & 1.03 & 4.00 & 53.18 \\
		& RIDGE-GLS & 1.21 & 1.04 & 0.90 & 0.93 & 0.90 & 0.88 & 0.91 \\
		Fed Funds & LP & 1.00 & 1.18 & 1.51 & 1.71 & 1.97 & 2.44 & 2.99 \\ 
		Rate & BVAR-CV & 0.92 & 0.94 & 0.91 & 0.90 & 0.86 & 0.87 & 0.92 \\
		& H-BVAR & 0.75 & 0.77 & 1.32 & 1.38 & 1.25 & 1.15 & 1.20 \\
		\bottomrule
	\end{tabular}
	\renewcommand{\arraystretch}{1}
	\label{table:tab:ir_rootmse_2_ext}
\end{table}

\begin{table}
	\centering
	\renewcommand{\arraystretch}{0.9}
	\caption{Impulse Response Inference -- Setup B: CI Coverage}
	\vspace{-3pt}
	\begin{tabular}{ccccccccc}
		\toprule
		Variable & Method & $h$ = 1 & $h$ = 4 & $h$ = 8 & $h$ = 12 & $h$ = 16 & $h$ = 20 & $h$ = 24 \\
		\midrule
		& LS & 0.87 & 0.81 & 0.75 & 0.72 & 0.71 & 0.72 & 0.73 \\
		& RIDGE & 0.90 & 0.79 & 0.66 & 0.62 & 0.65 & 0.68 & 0.68 \\
		Real GDP & RIDGE-AS & 0.89 & 0.72 & 0.61 & 0.58 & 0.61 & 0.65 & 0.65 \\
		& LP & 0.87 & 0.93 & 0.94 & 0.94 & 0.93 & 0.93 & 0.91 \\
		& BVAR-CV & 0.70 & 0.71 & 0.63 & 0.64 & 0.71 & 0.75 & 0.76 \\
		& H-BVAR & 0.84 & 0.86 & 0.76 & 0.76 & 0.83 & 0.88 & 0.88 \\
		\midrule
		& LS & 0.86 & 0.83 & 0.80 & 0.76 & 0.73 & 0.72 & 0.70 \\
		& RIDGE & 0.85 & 0.76 & 0.66 & 0.62 & 0.61 & 0.60 & 0.58 \\
		GDP Deflator & RIDGE-AS & 0.83 & 0.70 & 0.61 & 0.58 & 0.57 & 0.57 & 0.55 \\
		& LP & 0.86 & 0.93 & 0.94 & 0.94 & 0.93 & 0.91 & 0.89 \\
		& BVAR-CV & 0.76 & 0.72 & 0.70 & 0.72 & 0.72 & 0.71 & 0.70 \\
		& H-BVAR & 0.84 & 0.83 & 0.79 & 0.78 & 0.78 & 0.78 & 0.77 \\
		\midrule
		& LS & 0.87 & 0.80 & 0.75 & 0.72 & 0.70 & 0.70 & 0.70 \\
		& RIDGE & 0.90 & 0.74 & 0.60 & 0.60 & 0.64 & 0.66 & 0.65 \\
		Consumption & RIDGE-AS & 0.89 & 0.67 & 0.55 & 0.55 & 0.60 & 0.62 & 0.62 \\
		& LP & 0.86 & 0.93 & 0.94 & 0.94 & 0.94 & 0.92 & 0.90 \\
		& BVAR-CV & 0.73 & 0.66 & 0.60 & 0.63 & 0.70 & 0.73 & 0.74 \\
		& H-BVAR & 0.84 & 0.79 & 0.70 & 0.74 & 0.79 & 0.82 & 0.84 \\
		\midrule
		& LS & 0.87 & 0.82 & 0.76 & 0.73 & 0.75 & 0.82 & 0.87 \\
		& RIDGE & 0.85 & 0.79 & 0.65 & 0.62 & 0.73 & 0.80 & 0.81 \\
		Investment & RIDGE-AS & 0.82 & 0.69 & 0.59 & 0.57 & 0.68 & 0.77 & 0.77 \\
		& LP & 0.87 & 0.94 & 0.94 & 0.95 & 0.94 & 0.94 & 0.94 \\
		& BVAR-CV & 0.70 & 0.73 & 0.67 & 0.71 & 0.77 & 0.81 & 0.83 \\
		& H-BVAR & 0.80 & 0.86 & 0.81 & 0.82 & 0.87 & 0.88 & 0.88 \\
		\midrule
		& LS & 0.86 & 0.81 & 0.76 & 0.74 & 0.74 & 0.79 & 0.81 \\
		& RIDGE & 0.88 & 0.80 & 0.68 & 0.64 & 0.70 & 0.72 & 0.66 \\
		Hours & RIDGE-AS & 0.87 & 0.74 & 0.62 & 0.59 & 0.65 & 0.67 & 0.62 \\
		& LP & 0.86 & 0.93 & 0.94 & 0.94 & 0.94 & 0.94 & 0.93 \\
		& BVAR-CV & 0.73 & 0.72 & 0.64 & 0.66 & 0.74 & 0.78 & 0.77 \\
		& H-BVAR & 0.88 & 0.86 & 0.73 & 0.72 & 0.80 & 0.85 & 0.86 \\
		\midrule
		& LS & 0.86 & 0.82 & 0.76 & 0.75 & 0.72 & 0.68 & 0.67 \\
		& RIDGE & 0.93 & 0.82 & 0.75 & 0.72 & 0.66 & 0.60 & 0.58 \\
		Compensation & RIDGE-AS & 0.91 & 0.71 & 0.69 & 0.67 & 0.61 & 0.56 & 0.55 \\
		& LP & 0.86 & 0.93 & 0.95 & 0.95 & 0.94 & 0.92 & 0.90 \\
		& BVAR-CV & 0.78 & 0.71 & 0.69 & 0.71 & 0.69 & 0.67 & 0.69 \\
		& H-BVAR & 0.85 & 0.83 & 0.80 & 0.81 & 0.82 & 0.82 & 0.83 \\
		\midrule
		& LS & 0.85 & 0.83 & 0.80 & 0.78 & 0.77 & 0.79 & 0.80 \\
		& RIDGE & 0.79 & 0.77 & 0.74 & 0.68 & 0.68 & 0.72 & 0.72 \\
		Fed Funds & RIDGE-AS & 0.78 & 0.66 & 0.68 & 0.64 & 0.64 & 0.68 & 0.69 \\
		Rate & LP & 0.85 & 0.94 & 0.96 & 0.96 & 0.95 & 0.94 & 0.93 \\
		& BVAR-CV & 0.76 & 0.72 & 0.76 & 0.77 & 0.77 & 0.81 & 0.83 \\
		& H-BVAR & 0.87 & 0.86 & 0.74 & 0.73 & 0.78 & 0.84 & 0.87 \\
		\bottomrule
	\end{tabular}
	\renewcommand{\arraystretch}{1}
	\label{table:tab:ir_cover_2_ext}
\end{table}

\begin{table}
	\centering
	\renewcommand{\arraystretch}{0.9}
	\caption{Impulse Response Inference -- Setup B: CI Length (rescaled $\times 100$)}
	\vspace{-3pt}
	\begin{tabular}{ccccccccc}
		\toprule
		Variable & Method & $h$ = 1 & $h$ = 4 & $h$ = 8 & $h$ = 12 & $h$ = 16 & $h$ = 20 & $h$ = 24 \\
		\midrule
		& LS & 0.71 & 1.56 & 2.07 & 2.31 & 2.32 & 2.24 & 2.15 \\
		& RIDGE & 0.79 & 1.56 & 1.85 & 1.95 & 1.92 & 1.85 & 1.77 \\
		Real GDP & RIDGE-AS & 0.74 & 1.31 & 1.65 & 1.76 & 1.75 & 1.70 & 1.64 \\
		& LP & 0.71 & 2.42 & 4.21 & 5.40 & 5.90 & 5.91 & 5.70 \\
		& BVAR-CV & 0.53 & 1.23 & 1.74 & 2.00 & 2.10 & 2.13 & 2.15 \\
		& H-BVAR & 0.58 & 1.36 & 1.87 & 2.16 & 2.32 & 2.44 & 2.55 \\
		\midrule
		& LS & 0.26 & 0.74 & 1.47 & 2.14 & 2.69 & 3.12 & 3.46 \\
		& RIDGE & 0.32 & 0.81 & 1.41 & 1.96 & 2.43 & 2.82 & 3.13 \\
		GDP Deflator & RIDGE-AS & 0.31 & 0.71 & 1.28 & 1.79 & 2.23 & 2.59 & 2.88 \\
		& LP & 0.26 & 1.09 & 2.68 & 4.39 & 5.95 & 7.23 & 8.13 \\
		& BVAR-CV & 0.21 & 0.61 & 1.26 & 1.88 & 2.43 & 2.91 & 3.32 \\
		& H-BVAR & 0.23 & 0.71 & 1.42 & 2.09 & 2.71 & 3.27 & 3.80 \\
		\midrule
		& LS & 0.63 & 1.35 & 1.97 & 2.29 & 2.34 & 2.30 & 2.24 \\
		& RIDGE & 0.71 & 1.35 & 1.74 & 1.93 & 1.96 & 1.93 & 1.88 \\
		Consumption & RIDGE-AS & 0.67 & 1.15 & 1.57 & 1.75 & 1.79 & 1.77 & 1.74 \\
		& LP & 0.63 & 2.04 & 3.92 & 5.27 & 5.89 & 6.00 & 5.88 \\
		& BVAR-CV & 0.49 & 1.09 & 1.65 & 1.97 & 2.12 & 2.20 & 2.24 \\
		& H-BVAR & 0.53 & 1.21 & 1.79 & 2.16 & 2.41 & 2.60 & 2.77 \\
		\midrule
		& LS & 3.38 & 6.65 & 7.89 & 7.89 & 7.31 & 6.69 & 6.18 \\
		& RIDGE & 3.79 & 6.81 & 6.93 & 6.46 & 5.79 & 5.19 & 4.73 \\
		Investment & RIDGE-AS & 3.59 & 5.57 & 6.11 & 5.77 & 5.21 & 4.72 & 4.34 \\
		& LP & 3.37 & 10.16 & 16.00 & 18.85 & 19.06 & 18.22 & 17.23 \\
		& BVAR-CV & 2.64 & 5.26 & 6.59 & 6.91 & 6.78 & 6.57 & 6.38 \\
		& H-BVAR & 2.89 & 5.74 & 7.08 & 7.54 & 7.63 & 7.60 & 7.58 \\
		\midrule
		& LS & 0.70 & 1.64 & 2.27 & 2.42 & 2.29 & 2.11 & 1.99 \\
		& RIDGE & 0.82 & 1.73 & 2.10 & 2.06 & 1.86 & 1.66 & 1.52 \\
		Hours & RIDGE-AS & 0.79 & 1.49 & 1.87 & 1.85 & 1.68 & 1.51 & 1.40 \\
		& LP & 0.70 & 2.49 & 4.52 & 5.62 & 5.83 & 5.57 & 5.26 \\
		& BVAR-CV & 0.57 & 1.30 & 1.90 & 2.07 & 2.03 & 1.94 & 1.88 \\
		& H-BVAR & 0.62 & 1.49 & 2.15 & 2.35 & 2.38 & 2.37 & 2.37 \\
		\midrule
		& LS & 0.86 & 1.17 & 1.18 & 1.18 & 1.21 & 1.24 & 1.25 \\
		& RIDGE & 0.97 & 1.21 & 1.07 & 1.05 & 1.06 & 1.06 & 1.06 \\
		Compensation & RIDGE-AS & 0.93 & 0.95 & 0.95 & 0.96 & 0.97 & 0.98 & 0.98 \\
		& LP & 0.86 & 1.80 & 2.53 & 2.87 & 3.11 & 3.26 & 3.32 \\
		& BVAR-CV & 0.69 & 0.94 & 1.00 & 1.05 & 1.11 & 1.18 & 1.23 \\
		& H-BVAR & 0.78 & 1.10 & 1.26 & 1.40 & 1.54 & 1.67 & 1.78 \\
		\midrule
		& LS & 0.25 & 0.39 & 0.43 & 0.43 & 0.41 & 0.38 & 0.35 \\
		& RIDGE & 0.29 & 0.39 & 0.37 & 0.36 & 0.33 & 0.30 & 0.29 \\
		Fed Funds & RIDGE-AS & 0.27 & 0.31 & 0.33 & 0.32 & 0.30 & 0.28 & 0.27 \\
		Rate & LP & 0.25 & 0.59 & 0.88 & 1.01 & 1.05 & 1.03 & 0.98 \\
		& BVAR-CV & 0.21 & 0.31 & 0.36 & 0.37 & 0.36 & 0.35 & 0.34 \\
		& H-BVAR & 0.23 & 0.36 & 0.42 & 0.44 & 0.45 & 0.45 & 0.46 \\
		\bottomrule
	\end{tabular}
	\renewcommand{\arraystretch}{1}
	\label{table:tab:ir_length_2_ext}
\end{table}


\end{document}